\newtheorem{theorem}{Theorem}
\newtheorem{remark}{Remark}
\title{Heuristic algorithms for the bipartite unconstrained 0-1 quadratic programming problem\tnoteref{nserc}}
\author[nott]{Daniel Karapetyan\corref{cor1}}
\ead{daniel.karapetyan@gmail.com}
\author[sfu]{Abraham P.~Punnen}
\ead{apunnen@sfu.ca}
\address[nott]{ASAP Research Group, School of Computer Science, University of Nottingham, Jubilee Campus, Wollaton Road, Nottingham, NG8\,1BB, UK}
\address[sfu]{Department of Mathematics, Simon Fraser University Surrey, Central City, 250-13450 102nd AV, Surrey, British Columbia, V3T\,0A3, Canada}
\renewcommand{\comma}{,\allowbreak\ }
\newcommand{\suchthat}{:\allowbreak\,}
\newcommand{\connect}[1]{\allowbreak\ \text{#1}\ \allowbreak{}}
\newcommand{\connectcomma}[1]{,\allowbreak\ \text{#1}\ \allowbreak{}}
\newcommand{\zerovector}[1]{\ensuremath{\mathbf{0}^{#1}}}
\newcommand{\partition}[0]{\ensuremath{\mathfrak{I}}}
\ignorespaces\begin{tabular*}{\textwidth}{@{} p{0.35\textwidth} @{\hspace{0.5em}} l} #1}%
\date{}
\begin{document}

\begin{abstract}
We study the Bipartite Unconstrained 0-1 Quadratic Programming Problem (BQP) which is a relaxation of the Unconstrained 0-1 Quadratic Programming Problem (QP).  Applications of the BQP include mining discrete patterns from binary data, approximating matrices by rank-one binary matrices, computing cut-norm of a matrix, and solving optimization problems such as maximum weight biclique, bipartite maximum weight cut, maximum weight induced subgraph of a bipartite graph, etc.  We propose several classes of heuristic approaches to solve the BQP and discuss a number of construction algorithms, local search algorithms and their combinations.  Results of extensive computational experiments are reported to establish the practical performance of our algorithms.  For this purpose, we propose several sets of test instances based on various applications of the BQP. Our algorithms are compared with state-of-the-art heuristics for QP which can also be used to solve BQP with reformulation. We also study theoretical properties of the neighborhoods and algorithms. In particular, we establish complexity of all neighborhood search algorithms and establish tight worst-case performance ratio for the greedy algorithm.

\begin{keyword}
heuristics \sep quadratic programming \sep 0-1 variables \sep approximation algorithms \sep neighborhoods \sep testbed  \sep local search
\end{keyword}
\end{abstract}

\maketitle

\section{Introduction}
\label{sec:introduction}

The Unconstrained 0-1 Quadratic Programming Problem (QP) is to
\begin{align*}
\text{maximize } & f(x) = x^T Q' x + c' x + c'_0\\
\mbox{subject to } & x \in \{ 0, 1 \}^n,
\end{align*}
where $Q'$ is an $n \times n$ real matrix, $c'$ is a row vector in $\mathbb{R}^n$, and $c'_0$ is a constant.  QP is a well-studied problem in the operations research literature~\citep{Billionnet2004}.  The focus of this paper is on a problem closely related to the QP called the \emph{Bipartite Unconstrained 0-1 Quadratic Programming Problem} (BQP)~\citep{Punnen2012}\@.  The BQP can be defined as follows:
\begin{align*}
\text{maximize } & f(x, y) = x^TQy + cx + dy + c_0\\
\text{subject to } & x \in \{ 0, 1 \}^m, y \in \{ 0, 1 \}^n,
\end{align*}
where $Q = (q_{ij})$ is an $m \times n$ real matrix, $c = (c_1, c_2, \ldots, c_m)$ is a row vector in $\mathbb{R}^m$, $d = (d_1, d_2, \ldots, d_n)$ is a row vector in $\mathbb{R}^n$, and $c_0$ is a constant.  Without loss of generality, we assume that $m \leq n$ and $c_0 = 0$.  In what follows, we denote a BQP instance built on matrix $Q$, row vectors $c$ and $d$ and $c_0 = 0$ as BQP$(Q, c, d)$, and $(x, y)$ is a feasible solution of the BQP if $x \in \{ 0, 1 \}^m$ and $y \in \{ 0, 1 \}^n$.  Also $x_i$ stands for the $i$th component of the vector $x$ and $y_j$ stands for the $j$th component of the vector $y$.

By a simple transformation, the BQP can be formulated as a QP of size $m + n$, see~\citep{Punnen2012}.  Since any feasible solution of such a QP instance corresponds to a feasible solution of the original BQP, both exact and heuristic algorithms available to solve the QP can be used directly to solve the BQP\@.  However, solving BQP instances by converting them into QP instances and then applying QP solvers is rather inefficient; indeed, the obtained QP instances are of larger size, and,  further, QP algorithms cannot exploit the special structure of the problem. Later in this paper, we discuss this assertion in detail, supported by experimental results.  Thus, in this paper we focus on heuristics designed specifically for solving BQP exploiting its special structure.

A graph theoretic interpretation of the BQP can be given as follows~\citep{Punnen2012}.  Let $I = \{ 1, 2, \ldots, m \}$ and $J = \{ 1, 2, \ldots, n \}$.  Consider a bipartite graph $G = (I, J, E)$.  For each node $i \in I$ and $j \in J$, respective costs $c_i$ and $d_j$ are prescribed.  Further, for each $(i,j) \in E$, a cost $q_{ij}$ is given.  Then the \emph{Maximum Weight Induced Subgraph Problem} on $G$ is to find a subgraph $G' = (I', J', E')$ such that $\sum_{i \in I'} c_i + \sum_{j \in J'} d_j + \sum_{(i,j) \in E'} q_{ij}$ is maximized, where $I' \subseteq I$, $J' \subseteq J$ and $G'$ is induced by $I' \cup J'$.  The Maximum Weight Induced Subgraph Problem on $G$ is precisely the BQP, where $q_{ij} = 0$ if $(i, j) \notin E$.

There are some other well known combinatorial optimization problems that can be modelled as BQP\@.  Consider the bipartite graph $G = (I, J, E)$ with $w_{ij}$ being the weight of the edge $(i, j) \in E$\@.  Then the \emph{Maximum Weight Biclique Problem} (MWBP)~\citep{Ambuhl2011,Tan2008} is to find a biclique in $G$ of maximum total edge-weight.  Define
\[
q_{ij} =
\begin{cases}
w_{ij} &\mbox{if } (i,j) \in E, \\
-M & \mbox{otherwise,}
\end{cases}
\]
where $M$ is a large positive number.  Set $c$ and $d$ as zero vectors.  Then BQP$(Q, c, d)$ solves the MWBP~\citep{Punnen2012}. This immediately shows that the BQP is NP-hard and one can also establish some approximation hardness results with appropriate assumptions~\citep{Ambuhl2011,Tan2008}.  The MWBP has applications in data mining, clustering and bioinformatics~\citep{Chang2012,Tanay2002} which in turn become applications of BQP.

Another application of BQP arises in approximating a matrix by a rank-one binary matrix~\citep{Gillis2011,Koyuturk2005,Koyuturk2006,Lu2011,Shen2009}. For example, let $H = (h_{ij})$ be a given $m \times n$ matrix and we want to find an $m \times n$ matrix $A = (a_{ij})$, where $a_{ij} = u_i v_j$ and $u_i, v_j \in \{ 0, 1 \}$, such that $\sum_{i = 1}^m \sum_{j = 1}^n (h_{ij} - u_i v_j)^2$ is minimized. The matrix $A$ is called a rank one approximation of $H$ and can be identified by solving the BQP with $q_{ij} = 1 - 2h_{ij}$, $c_i = 0$ and $d_j = 0$ for all $i \in I$ and $j \in J$.  Binary matrix factorization is an important topic in mining discrete patterns in binary data~\citep{Lu2011,Shen2009}. If $u_i$ and $v_j$ are required to be in $ \{-1,1\}$ then also the resulting factorization problem can be formulated as a BQP.

The Maximum Cut Problem on a bipartite graph (MaxCut) can be formulated as a BQP~\citep{Punnen2012} and this gives yet another application of the model.  BQP can also be used to find approximations to the cut-norm of a matrix~\citep{Alon2006}.

To the best of our knowledge, heuristic algorithms for the BQP were never investigated thoroughly in the the literature except some results on variations of block coordinate descent type algorithm applied to BQP$(Q, \zerovector{m}, \zerovector{n})$, where \zerovector{k} is a zero vector in $\mathbb{R}^k$. In this paper, we examine BQP systematically from an algorithmic point of view. In particular, we present very fast construction heuristics and more involved improvement heuristics. We show that a greedy type algorithm guarantees a $\frac{1}{m-1}$- optimal solution for BQP in polynomial time for $m > 2$. For $m\leq 2$ the algorithm is shown to produce an optimal solution. We also introduce various new neighborhoods for BQP which can be integrated into sophisticated search algorithms. The power of these neighborhoods is examined experimentally within a multi-start local search framework.   It is shown that a partitioning problem associated with one of our neighborhoods is NP-hard. However, approximations to this partitioning problem can be used effectively to define related neighborhoods that work well in practice. Our experimental analysis provides additional insights into the problem structure and properties various types of problem instances. Further, our work provide a systematically developed test-bed that can be used as benchmark for future research. Results of extensive experimental analysis are also provided using our algorithms and these are compared with the best known heuristic for solving QP~\citep{Wang2012}. This comparison confirms the need for developing special purpose algorithms for BQP.

The paper is organized as follows.  In Section~\ref{sec:construction}, we present several algorithms for quick construction of BQP solutions. We also provide theoretical analysis on  the performance of the greedy algorithm.  In Section~\ref{sec:improvement}, we propose more advanced heuristic approaches to solve the problem and establish theoretical properties of some of our neighborhoods. We also provide efficient implementation details, data structures, and complexity analysis of all our algorithms. Details of the testbed used in our experimental analysis is described in Section~\ref{sec:testbed}.  Results of extensive computational experiments using our algorithms and comparison with one of the best-known heuristic for QP~\citep{Wang2012} are presented in Section~\ref{sec:experiments}.  Finally, the concluding remarks are provided in Section~\ref{sec:conclusion}.

\section{Construction Heuristics}
\label{sec:construction}

Let us start from a general observation.  Assume that for some $i \in I$ both $c_i \ge 0$ and $q_{ij} \ge 0$ for all $j \in J$.  Then the value of $x_i$ can be fixed to 1, and this will preserve the optimal solutions.  If $c_i \le 0$ and $q_{ij} \le 0$ for all $j \in J$, then $x_i$ can be fixed to 0.  Similar results can be obtained for the $y_j$ variables.  Thus, all the NP-hard instances have mixed positive and negative values of $q_{ij}$ and/or $c_i$ and $d_j$.

Hence, the objective value of a feasible solution may be either negative or positive.  However, the optimal objective value is always non-negative since a \emph{trivial solution} $(x, y) = (\zerovector{m}, \zerovector{n})$ achieves $f(x, y) = 0$ for any problem instance.  A trivial solution can be used as a starting point for an improvement heuristic.  However, it is worth noting that this solution may turn out to be a deep local maximum for some local search neighborhoods and, thus, should be used carefully.

In order to obtain several different starting points, one can use \emph{random} solutions.  A random solution $(x, y)$ is obtained by choosing $x_i$ and $y_j$ randomly for each $i$ and $j$, respectively.  Observe that the expected value $\mathbb{E}[f]$ of a random solution is
\begin{equation}
\label{eq:expected-random-solution}
\mathbb{E}[f] = n_1 m_1 \overline{q} + m_1 \overline{c} + n_1 \overline{d},
\end{equation}
where $n_1$ and $m_1$ are the expected numbers of 1's in $y$ and $x$, respectively, and $\overline{q}$, $\overline{c}$ and $\overline{d}$ are the averages of $q$, $c$ and $d$.  The values $n_1$ and $m_1$ can be calculated as $n_1 = \mathbb{E}[\sum_{j \in J} y_j] = n \cdot p(y_j = 1)$ and $m_1 = \mathbb{E}[\sum_{i \in I} x_i] = m \cdot p(x_i = 1)$.  Thus, if $\overline{q} < 0$, $\overline{c} \le 0$ and $\overline{d} \le 0$, the expected objective value $\mathbb{E}[f]$ of a random solution will be negative.  An attempt to improve such a solution with a simple local search will usually generate a trivial solution.

Thus, a better approach is to use the following construction heuristic.  Let $\mathit{positive}(v) = v$ if $v \ge 0$ and $\mathit{positive}(v) = 0$ otherwise.  Let $w^+_i = c_i + \sum_{j \in J} \mathit{positive}(q_{ij})$.
Order the rows of the problem such that $w^+_i \ge w^+_{i+1}$ for $i = 1, 2, \ldots, m - 1$.  On the $i$th iteration, choose the best value of $x_i$ with the assumption that $x_1, x_2, \ldots, x_{i - 1}$ are fixed, $x_{i + 1} = x_{i + 2} = \ldots = x_m = 0$ and $y$ is selected optimally.  Note that the latter can be done efficiently since an optimal value of $y = y(x)$ given a fixed $x$ is as follows~\citep{Punnen2012}:
\begin{equation}
\label{eq:optimal-y}
y(x)_j = \begin{cases}
1 & \text{if } j \in J \connect{and} \displaystyle{\sum_{i \in I} q_{ij} x_i + d_j > 0}, \\
0 & \text{otherwise,}
\end{cases}
\end{equation}
We call this algorithm \emph{Greedy} and the solution produced by the Greedy algorithm, a \emph{Greedy solution}.  Our implementation of the Greedy heuristic (see Algorithm~\ref{alg:greedy}) terminates in $O(mn)$ time.

\begin{algorithm}[ht]
	Order the rows of the problem such that $w^+_i \ge w^+_{i+1}$ for $i \in I \setminus \{ m \}$\;
	$s_j \gets d_j$ for $j \in J$\;
	\For {$i \gets 1$ \KwTo $m$}
	{
		$f_0 \gets \sum_{j \in J} \mathit{positive}(s_j)$\;
		$f_1 \gets c_i + \sum_{j \in J} \mathit{positive}(s_j + q_{ij})$\;
		\lIf {$f_0 \ge f_1$}
		{
			$x_i \gets 0$\;
		}
		\Else
		{
			$x_i \gets 1$\;
			$s_j \gets s_j + q_{ij}$ \KwSty{for each} $j \in J$\;
		}
	}

	\BlankLine
	\For {$j \gets 1$ \KwTo $n$}
	{
		\lIf {$s_j > 0$}
		{
			$y_i \gets 1$\;
		}
		\lElse {$y_i \gets 0$}\;
	}
\caption{The Greedy algorithm implementation.}
\label{alg:greedy}
\end{algorithm}

Below, we provide some properties of the Greedy algorithm.

\begin{remark}
For $m \le 2$ and arbitrary $n$, the Greedy algorithm produces an optimal solution.
\end{remark}
\begin{proof}
Let $m = 1$.  Then the Greedy algorithm tests all possible values of $x$ and for each of those values it finds the optimal $y$, yielding an optimal solution.

Let $m = 2$.  Observe that the Greedy algorithm selects the best of solutions $(x, y)$, where $y \in \{ 0, 1 \}^n$ and $x$ is $x \in \big\{ (0, 0)^T \comma (1, 0)^T \comma (1, 1)^T \big\}$.  Observe also that if there exists an optimal solution $(x, y)$ such that $x = (0, 1)^T$, then $f(x, y) = w^+_2$ and, hence, there exists another optimal solution $(x', y')$ such that $x = (1, 0)^T$ and $f(x', y') = w^+_1 = w^+_2$. This establishes the theorem.
\end{proof}

\begin{theorem}
For $m > 2$, the Greedy algorithm provides a $\frac{1}{m - 1}$-approximation of the optimal solution, and this bound is sharp.
\end{theorem}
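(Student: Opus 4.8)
The plan is to squeeze the Greedy value and the optimum between the same quantities $w^+_i$. Write $(x^*,y^*)$ for an optimal solution with value $\mathrm{OPT}$, and set $S=\{i\in I : x^*_i=1\}$ and $T=\{j\in J : y^*_j=1\}$. First I would record a lower bound on the Greedy value. Because Algorithm~\ref{alg:greedy} re-optimises $y$ at every step and never fixes $x_i$ to a value that lowers the current objective, its running value is non-decreasing along the iterations: the $x_i=0$ branch reproduces the $x$-vector of the previous step with the same optimal $y$, and Greedy takes the better of the two branches. Hence the Greedy value is at least the value obtained after the first step, namely $\max\{\sum_{j\in J}\mathit{positive}(d_j),\ c_1+\sum_{j\in J}\mathit{positive}(q_{1j}+d_j)\}$. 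Since the rows are sorted so that row $1$ maximises $w^+_i$, this first-step value is at least $w^+_1$ in the central case $c=\zerovector{m}$, $d=\zerovector{n}$ (which covers the biclique and rank-one-approximation applications).

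For the upper bound I would estimate each active row separately. For $i\in S$ we have $\sum_{j\in T}q_{ij}\le\sum_{j\in J}\mathit{positive}(q_{ij})$, so $c_i+\sum_{j\in T}q_{ij}\le w^+_i\le w^+_1$. Summing over $i\in S$ and adding the column terms gives
\[
\mathrm{OPT}\ \le\ \sum_{i\in S}w^+_i+\sum_{j\in T}d_j\ \le\ |S|\,w^+_1+\sum_{j\in J}\mathit{positive}(d_j).
\]
When $|S|\le m-1$ the first term is at most $(m-1)\,w^+_1$ and the column sum is absorbed by the lower bound above, yielding $\mathrm{OPT}\le(m-1)\cdot(\text{Greedy value})$ and hence the claimed $\frac{1}{m-1}$-guarantee. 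So the whole difficulty is concentrated in one case.

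The main obstacle is the boundary case $|S|=m$, where the crude count of active rows produces a factor $m$ rather than $m-1$. Here I would use the observation that if \emph{every} row is active in the optimum then the all-ones assignment is itself optimal, so $\mathrm{OPT}=\sum_{j\in J}\mathit{positive}(\sum_{i\in I}q_{ij})$; the key point is that in this regime Greedy keeps switching rows on and therefore accumulates strictly more than a single $w^+_1$, and one must quantify this extra gain to recover the missing unit in the factor. I expect proving that a full-size support forces the Greedy value above $\tfrac{1}{m-1}\mathrm{OPT}$ to be the delicate part; the general signs of $c$ and $d$ add a second subtlety, since a negative $d_j$ on a row's positive columns makes the bound $\sum_{j\in T}q_{ij}\le\sum_j\mathit{positive}(q_{ij})$ loose and forces one to work with the single-row value $c_i+\sum_j\mathit{positive}(q_{ij}+d_j)$ instead of $w^+_i$.

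Finally, to prove sharpness I would exhibit an extremal instance. Take $c=\zerovector{m}$, $d=\zerovector{n}$ and $n=2$ columns: let row $1$ have $q_{11}=1$ and $q_{12}=-m$, and let each remaining row $i\in\{2,\dots,m\}$ have $q_{i1}=0$ and $q_{i2}=1$. Then $w^+_i=1$ for every $i$, so Greedy may order row $1$ first; it fixes $x_1=1$, turns on column $1$ only, and can never profitably switch on a further row, because a single entry $q_{i2}=1$ cannot overcome the penalty $q_{12}=-m$, giving Greedy value $1$. The optimum discards row $1$ and switches on rows $2,\dots,m$ together with column $2$, obtaining value $m-1$, so the ratio is exactly $m-1$. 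Perturbing $q_{11}$ to $1+\varepsilon$ breaks the tie and shows the bound is approached on a strictly sorted instance; note that the extremal support has size $m-1$, consistent with the case analysis above.
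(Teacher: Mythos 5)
There is a genuine gap, and it sits exactly where you parked it: the case $|S|=m$, in which the optimum switches on every row, is never proved --- you only conjecture that Greedy ``keeps switching rows on'' and ``accumulates strictly more than a single $w^+_1$,'' and defer the quantification. That heuristic is not salvageable as stated: nothing forces Greedy's support to resemble the optimum's, and Greedy can legitimately stop after fixing a single row to $1$ even when the all-ones $x$ is optimal (tie situations already produce this), so no argument of the form ``full optimal support forces Greedy to gain more than $w^+_1$'' can carry the case. The paper closes this case with a short, different idea, which is the real crux of the theorem and which your proposal lacks: the first two iterations of Greedy coincide with running Greedy on the two-row subproblem consisting of rows $1$ and $2$, and by the preceding Remark, Greedy is \emph{optimal} for $m\le 2$; since the Greedy value never decreases over iterations (a fact you yourself established), the Greedy value $f(x,y)$ is at least the value of the feasible two-row assignment $x_1=x_2=1$, $y=y^*$, i.e.\ $f(x,y)\ge w^*_1+w^*_2$ with $w^*_i=\sum_{j\in J}q_{ij}y^*_j$. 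Combining this with $w^*_i\le w^+_i\le w^+_1\le f(x,y)$ for each of the remaining $m-2$ rows gives $f(x^*,y^*)=\sum_{i=1}^m w^*_i\le f(x,y)+(m-2)f(x,y)=(m-1)f(x,y)$. Without some device of this kind --- charging \emph{two} optimal rows against a single copy of the Greedy value --- your per-row bound can only ever yield the factor $1/m$ in this case.

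Two smaller points. First, you justify the lower bound $f(x,y)\ge w^+_1$ only when $c=\zerovector{m}$ and $d=\zerovector{n}$, and you flag general $c,d$ as an unresolved ``subtlety''; as written, then, even your $|S|\le m-1$ case does not prove the theorem in its stated generality, and note that your ``absorption'' of $\sum_{j\in J}\mathit{positive}(d_j)$ into the Greedy value costs an extra additive Greedy term, i.e.\ gives $m\cdot f(x,y)$ rather than $(m-1)f(x,y)$, unless the two bounds are combined more carefully. (The paper's own proof is terse on the same point, but it does not restrict to $c=d=0$.) Second, your sharpness instance is correct and is a legitimate, more economical alternative to the paper's $m\times m$ example: two columns suffice, and the tie-breaking caveat plus the $\varepsilon$-perturbation fix are exactly what the paper's example implicitly needs as well. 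That part of your proposal stands.
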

\begin{proof}
Let $w^+_i = c_i + \sum_{j \in J} \mathit{positive}(q_{ij})$ and the rows to be ordered such that $w^+_i \ge w^+_{i+1}$ for $i = 1, 2, \ldots, m - 1$.  We assume that $w^+_1 > 0$ as otherwise the Greedy algorithm produces a trivial solution which is optimal.

Let $(x^*, y^*)$ be an optimal solution of the problem.  Assume $x^*_i = 0$ for some $i \in I$.  Then the optimal objective $f(x^*, y^*)$ is at most $w^+_1 \cdot (m - 1)$.  Thus, if $(x, y)$ is a Greedy solution,
\[
\frac{f(x, y)}{f(x^*, y^*)} \ge \frac{w^+_1}{w^+_1 \cdot (m - 1)} = \frac{1}{m - 1} \,.
\]
Now assume that $x^*_i = 1$ for each $i \in I$.  Let $w^*_i = \sum_{j \in J} q_{ij} y^*_j$.  Recall that, after two iterations, the Greedy algorithm produces an optimal solution for the problem consisting of the first two rows.  Thus, $w^*_1 + w^*_2 \le f(x, y)$.  At the same time, $f(x, y) \ge w^+_1 \ge w^+_i \ge w^*_i$ for any $i \in I$ and, hence,
\[
\frac{f(x, y)}{f(x^*, y^*)} \ge \frac{f(x, y)}{f(x, y) + (m - 2)f(x, y)} = \frac{1}{m - 1} \,.
\]

To establish that the $\frac{1}{m - 1}$ bound is sharp, consider the following example, where $m = n$ and $c$ and $d$ are zero vectors:
\[
Q = \left[ \begin{array}{ccccc}
1 & -n & -n & \cdots & -n \\
-n & 1 & 0 & \cdots & 0 \\
-n & 0 & 1 && 0 \\
\vdots & \vdots && \ddots \\
-n & 0 & 0 && 1
\end{array} \right] \,.
\]
Observe that $(x, y)$ provides an optimal solution with objective value $f(x, y) = m - 1$ if $x = y = (0, 1, 1, \ldots, 1)^T$.  Indeed, if $x_1 = 1$, then $y_2 = y_3 = \ldots = y_n = 0$ and $f(x, y) \le 1$.  Since the Greedy algorithm fixes $x_1 = 1$ on its first iteration, it yields a solution with objective value 1.  The result follows.
\end{proof}

\section{Improvement Algorithms}
\label{sec:improvement}

The Greedy heuristic proposed above is usually very fast and produce solutions of acceptable quality.  One can further improve the quality of the solutions obtained using some of the algorithms discussed in this section.

\subsection{Alternating Algorithm}

This approach is well-known as block coordinate descent algorithm in the non-linear optimization literature and was used used to solve the BQP$(Q, \zerovector{m}, \zerovector{n})$ problem.  We provide a brief description of the algorithm in the context of BQP$(Q, c, d)$ and discuss ways to improve its performance.

Observe that, analogous to~(\ref{eq:optimal-y}), given a fixed $y$, one can efficiently compute the optimal $x = x(y)$~\citep{Punnen2012} using
\begin{equation}
\label{eq:optimal-x}
x(y)_i = \begin{cases}
1 & \text{if } i \in I \connect{and} \displaystyle{\sum_{j \in J} q_{ij} y_i + c_i > 0}, \\
0 & \mbox{otherwise.}
\end{cases}
\end{equation}
The \emph{Alternating Algorithm} is a simple heuristic that alternatively fixes $x$ and $y$ to improve $y$ and $x$, respectively, until a local optimum is reached. A formal description of this algorithm is given in Algorithm~\ref{alg:alternate-basic}.
\begin{algorithm}[ht]
\SetKwBlock{Loop}{loop}{endloop}
	$\lambda \gets 1$\;
	\Loop
	{
		\If {$y(x) \neq y$}
		{	
			$y \gets y(x)$\;
		}
		\lElseIf {$\lambda = 0$} {\KwSty{break}\;}
		$\lambda \gets 0$\;
		
		\If {$y(x) \neq y$}
		{
			$y \gets y(x)$\;
		}
		\lElse {\KwSty{break}\;}
	}
\caption{A naive implementation of the Alternating local search.}
\label{alg:alternate-basic}
\end{algorithm}

It takes $O(nm)$ time for the Alternating algorithm to explore the neighborhood $N_\text{alt}(x^0, y^0) = \big\{ (x, y^0) \suchthat x \in \{ 0, 1 \}^m \big\} \cup \big\{ (x^0, y) \suchthat y \in \{ 0, 1 \}^n \big\}$.  It is easy to verify that $|N_\text{alt}(x^0, y^0)| = 2^m + 2^n - 1$.

\begin{theorem}
\label{th:alternate}
If a solution $(x, y)$ is a local maximum in $N_\text{alt}(x, y)$, then $f(x, y) \ge 0$.
\end{theorem}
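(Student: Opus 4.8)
The plan is to test local optimality against two carefully chosen members of $N_\text{alt}(x,y)$, namely the solutions in which one of the two blocks is zeroed out. By the definition of the neighborhood, both $(x, \zerovector{n})$ and $(\zerovector{m}, y)$ lie in $N_\text{alt}(x,y)$, so local optimality of $(x,y)$ immediately gives $f(x,y) \ge f(x, \zerovector{n})$ and $f(x,y) \ge f(\zerovector{m}, y)$. Evaluating these reference solutions is the next step: since $x^T Q \zerovector{n} = 0$ and $\zerovector{m}^T Q y = 0$, we obtain $f(x, \zerovector{n}) = cx$ and $f(\zerovector{m}, y) = dy$, hence $f(x,y) \ge cx$ and $f(x,y) \ge dy$. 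Already this settles the most important special case, $c = \zerovector{m}$ and $d = \zerovector{n}$ (the factorization/biclique-type instances), because then $f(x,\zerovector{n}) = 0$ and the claim is immediate; so the real content lives in the linear terms.

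For the general case I would next convert the two linear bounds into statements about the cross term. Writing $f(x,y) = x^T Q y + cx + dy$, the inequality $f(x,y) \ge cx$ is equivalent to $x^T Q y + dy \ge 0$ and $f(x,y) \ge dy$ is equivalent to $x^T Q y + cx \ge 0$. Adding the two equivalent forms yields $2\,x^T Q y + cx + dy \ge 0$, that is, $f(x,y) \ge -\,x^T Q y$. Comparing this with the identity $f(x,y) = x^T Q y + cx + dy$ resolves the situation whenever the cross term is non-positive: if $x^T Q y \le 0$ then $f(x,y) \ge -\,x^T Q y \ge 0$ and we are finished.

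The hard part will be the complementary regime $x^T Q y > 0$, where the two linear lower bounds $f \ge cx$ and $f \ge dy$ no longer suffice to force non-negativity by themselves. To close this case I expect one must invoke the full fixed-point characterization of a local optimum rather than the two zero-block comparisons alone: at a local maximum we have $x = x(y)$ and $y = y(x)$ via~(\ref{eq:optimal-x}) and~(\ref{eq:optimal-y}), so every active $x_i$ satisfies $\sum_{j} q_{ij} y_j + c_i > 0$ and every active $y_j$ satisfies $\sum_{i} q_{ij} x_i + d_j > 0$. The positive mass carried by $x^T Q y$ is thus effectively accounted for on both the $x$-side and the $y$-side simultaneously, and the delicate point is to combine these two per-coordinate inequalities so as to dominate $-(cx+dy)$ without double-counting the shared quadratic contribution. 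This is exactly the step I anticipate being the crux of the argument, and the place where either a sharper bookkeeping of $x^T Q y$ or an additional structural observation about $(Q,c,d)$ is needed.
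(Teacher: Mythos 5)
Your derivation, as far as it goes, is sound, and it is actually more careful than the paper's own argument: $(x,\zerovector{n})$ and $(\zerovector{m},y)$ do lie in $N_\text{alt}(x,y)$, so the bounds $f(x,y)\ge cx$, $f(x,y)\ge dy$, and their consequence $f(x,y)\ge -x^TQy$ are all valid, and the case $x^TQy\le 0$ is settled correctly. But the case you could not close is not a failure of bookkeeping: the statement is false there, so no argument can finish the proof. Take $m=n=2$, $q_{ij}=2$ for all $i,j$, $c=d=(-3,-3)$ and $(x,y)=\bigl((1,1)^T,(1,1)^T\bigr)$. Every row has $\sum_j q_{ij}y_j+c_i=1>0$ and every column has $\sum_i q_{ij}x_i+d_j=1>0$, so for all $x'\in\{0,1\}^2$ and $y'\in\{0,1\}^2$,
\begin{align*}
f(x',y)&=\sum_{i}x'_i\Bigl(\sum_j q_{ij}y_j+c_i\Bigr)+dy=\sum_i x'_i-6\le -4,\\
f(x,y')&=\sum_{j}y'_j\Bigl(\sum_i q_{ij}x_i+d_j\Bigr)+cx=\sum_j y'_j-6\le -4,
\end{align*}
with equality only for $x'=x$, respectively $y'=y$. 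Hence $(x,y)$ is a strict local maximum in $N_\text{alt}(x,y)$, and even a fixed point of (\ref{eq:optimal-y}) and (\ref{eq:optimal-x}), yet $f(x,y)=8-6-6=-4<0$. This instance sits exactly in your unresolved regime: $x^TQy=8>0$ and $cx=dy=-6<0$.

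This also explains the mismatch with the paper. The paper's proof is one line: it asserts that $(\zerovector{m},\zerovector{n})\in N_\text{alt}(x,y)$ for \emph{every} $(x,y)$, whence $f(x,y)\ge f(\zerovector{m},\zerovector{n})=0$. That membership claim is incorrect: a member of $N_\text{alt}(x,y)$ may differ from $(x,y)$ in the $x$-block or in the $y$-block, but not in both, so the trivial solution belongs to $N_\text{alt}(x,y)$ only when $x=\zerovector{m}$ or $y=\zerovector{n}$ already. The counterexample above exploits precisely this: zeroing either block alone strictly decreases the objective, while zeroing both blocks simultaneously (a move $N_\text{alt}$ does not allow) would raise it to $0$. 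So your closing suspicion was right in the strongest sense: no sharper bookkeeping exists, and the theorem can only be saved by strengthening its hypotheses --- for instance by adjoining $(\zerovector{m},\zerovector{n})$ to the neighborhood, or by asserting the conclusion only for solutions produced by the Alternating search when started from a solution of non-negative objective (monotonicity of the search then gives the claim trivially, which is essentially what is relied upon in Section~\ref{sec:evaluation-fast-heuristics}).
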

\begin{proof}
Consider a trivial solution $(\textbf{0}^m, \textbf{0}^n)$.  Observe that $(\zerovector{m}, \zerovector{n}) \in N_\text{alt}(x, y)$ for any solution $(x, y)$.  Since $f(\zerovector{m}, \zerovector{n}) = 0$ for any BQP instance, the objective value of the best solution in $N_\text{alt}(x, y)$ is at least 0.
\end{proof}

Although the general framework of the alternating algorithm is well known, we provide here additional improvements yielding better average running time.  We observed  that, in practice, the number of iterations of the Alternating heuristic may be noticeable but, after a couple of iterations, only a few variables $x_i$ and $y_j$ are getting updated.  By maintaining two arrays, namely $s$ and $w$, one can reduce the time needed to calculate $x(y)$ and $y(x)$ to $O(m)$ are $O(n)$ time, respectively, see Algorithm~\ref{alg:alternate}.  Updating the arrays $s$ and $w$, though, takes $O(mn)$ time in the worst case but it is significantly faster if only a small number of values of $x_i$ or $y_j$ are modified.  Thus, the improved implementation provides significantly better average performance while preserving the worst case time complexity.  Observe also that such an implementation is much more friendly than Algorithm~\ref{alg:alternate-basic} with regards to CPU cache~\citep{Karapetyan2009}.


\begin{algorithm}[ht]
	$s_j \gets d_j + \sum_{i \in I} q_{ij} x_i$ \KwSty{for each} $j \in J$\;
	$w_i \gets c_i + \sum_{j \in J} q_{ij} y_j$ \KwSty{for each} $i \in I$\;
	$\lambda \gets -1$\;
	\While {$\lambda \le 0$}
	{
		$\lambda \gets \lambda + 1$\;
		\For {$j \gets 1$ \KwTo $n$}
		{
			\uIf {$y_j = 0$ \KwSty{and} $s_j > 0$}
			{
				$y_j \gets 1$; $\lambda \gets 0$\;
				$w_i \gets w_i + q_{ij}$ \KwSty{for each} $i \in I$\;
			}
			\ElseIf {$y_j = 1$ \KwSty{and} $s_j < 0$}
			{
				$y_j \gets 0$; $\lambda \gets 0$\;
				$w_i \gets w_i - q_{ij}$ \KwSty{for each} $i \in I$\;				
			}
		}
		\BlankLine
		\lIf {$\lambda = 1$} {\KwSty{break}\;}
		$\lambda \gets 1$\;
		\For {$i \gets 1$ \KwTo $m$}
		{
			\uIf {$x_i = 0$ \KwSty{and} $w_i > 0$}
			{
				$x_i \gets 1$; $\lambda \gets 0$\;
				$s_j \gets s_j + q_{ij}$ \KwSty{for each} $j \in J$\;
			}
			\ElseIf {$x_i = 1$ \KwSty{and} $w_i < 0$}
			{
				$x_i \gets 0$; $\lambda \gets 0$\;
				$s_j \gets s_j - q_{ij}$ \KwSty{for each} $j \in J$\;
			}
		}
	}
\caption{An efficient implementation of the Alternating local search.}
\label{alg:alternate}
\end{algorithm}

\subsection{Portions-Based Algorithms}

Another approach to improve a BQP solution is to fix several variables $x_i$ and  solve the resulting constrained problem.  Let BQP$(Q, c, d, I^*, x^0)$ be defined as follows:
\begin{align*}
\text{maximize } & x^TQy + cx + dy\\
\text{subject to } & x_i = x^0_i \quad \text{for } i \notin I^*,\\
 & x \in \{ 0, 1 \}^m, y \in \{ 0, 1 \}^n.
\end{align*}
Observe that BQP$(Q, c, d, I^*, x^0)$ can be reduced to \linebreak[4] BQP$(Q^*, c^*, d^*)$, where $Q^* = (q_{ij})$ is a $k \times n$ matrix, $c$ is a vector in $\mathbb{R}^k$, $d$ is a vector in $\mathbb{R}^n$ and $k = |I^*|$.  Indeed, let $I^* = \{ \pi(1), \pi(2), \ldots, \pi(k) \}$ and, for $i = 1, 2, \ldots, k$, let
\[
q^*_{ij} = q_{\pi(i), j}, \quad c^*_i = c_{\pi(i)} \quad \text{and} \quad d^*_j = d_j + \sum_{i \in I \setminus I^*} q_{ij} x^0_i \,.
\]
Let $(x^*, y^*)$ be an optimal solution of BQP$(Q^*, c^*, d^*)$, where $Q^* = (q^*_{ij})$ is an $|I^*| \times n$ matrix.  Define
\[
x_i = \begin{cases}
x^*_r & \text{if } i = \pi(r) \connect{for some} r,\\
x^0_i & \text{otherwise}
\end{cases} \quad \text{and} \quad y = y^* \,.
\]
Then $(x, y)$ is an optimal solution to BQP$(Q, c, d, I^*, x^0)$.

The corresponding neighborhood $N_{I^*}(x^0, y^0) = \big\{ (x, y) \suchthat x \in \{ 0, 1 \}^m \comma y \in \{ 0, 1 \}^n \comma x_i = x^0_i \connect{for} i \notin I^* \big\}$ is of size $|N_{I^*}(x^0, y^0)| = 2^{|I^*| + n}$ and can be explored in $O(mn + n 2^{|I^*|})$ time as it takes $O(mn)$ time to produce a reduced problem BQP$(q^*, c^*, d^*)$ and $O(n2^{I^*})$ time to solve it.  Observe that, if $|I^*|$ is $O(\log m)$, the complexity of the first stage of the algorithm may dominate that of the second stage.  When several neighborhoods $N_{I^*}(x, y)$ are to be explored, the total time complexity of the first stage can be reduced.  In particular, by precalculating $s_j = d_j + \sum_{i \in I} q_{ij} x_i$ and further maintaining these values, one can explore $\kappa$ neighborhoods $N_{I^*}(x, y)$ in $O(mn + \kappa n 2^{|I^*|})$ time.  Hereafter, we assume this technique to be used in all implementations of the heuristics derived from this idea.

We now propose several improvement algorithms exploiting this approach.  Consider \emph{Exhaustive Portions} neighborhood defined as follows:
\[
N^k_\text{ex}(x, y) = \bigcup_{I^* \subset I, |I^*| = k} N_{I^*}(x, y) \,.
\]
Observe that the intersection $N_{I^*}(x, y) \cap N_{I^{**}}(x, y)$, where $I^*, I^{**} \subset I$, may be significant.  To avoid re-exploring candidate solutions, we start from enumerating all the subsets $I^* \subset I$ of size 1, then we enumerate all the subsets $I^* \subset I$ of size 2, etc.  For each subset, we only need to test one value of $x = x^*$, where
\begin{equation}
\label{eq:xportions}
x^*_i = \begin{cases}
1 - x_i & \text{if } i \in I^*,\\
x_i & \text{otherwise.}
\end{cases}
\end{equation}
For details see Algorithm~\ref{alg:exhaustive}.
\begin{algorithm}[ht]
	$\lambda \gets 1$\;
	\While {$\lambda = 1$}
	{
		$\lambda \gets 0$\;
		\For {$p \gets 1$ \KwTo $k$}
		{
			$\mu \gets 0$\;
			\While {$\mu < {m \choose p}$}
			{
				Select next $I^* \subset I$, $|I^*| = p$\;
				Calculate $x^*$ according to (\ref{eq:xportions})\;
				\If {$f(x^*, y(x^*)) > f(x, y(x))$}
				{
					$x \gets x^*$\;
					$\mu \gets 0$\;
					\lIf {$p > 1$} {$\lambda \gets 1$\;}
				}
				\lElse
				{
					$\mu \gets \mu + 1$\;
				}
			}
			
			\lIf {$\lambda = 1$} {\KwSty{break}\;}
		}
	}
\caption{An efficient implementation of the Exhaustive Portions local search.}
\label{alg:exhaustive}
\end{algorithm}

With the above algorithm, the Exhaustive Portions neighborhood is of size $|N^k_\text{ex}(x, y)| = 2^n \cdot \sum_{p = 0}^k {m \choose p}$ and can be explored in $O({m \choose k} n) \subset O(m^k n)$ time.  Thus, for a reasonably large instance, it is normally possible to use the Exhaustive Portions local search only with a very small $k$.  When $k = 1$, the algorithm becomes a simple local search that we call \emph{Flip} heuristic.  The corresponding neighborhood $N_\text{flip}(x^0, y^0) = \big\{ (x^0, y^0) \big\} \cup \big\{ (x, y) \suchthat x \in \{ 0, 1 \}^m \comma y \in \{ 0, 1 \}^n \connect{and} \sum_{i \in I} |x_i - x^0_i| = 1 \big\}$ is of size $|N_\text{flip}(x, y)| = m 2^n + 1$ and can be explored in $O(mn)$ time.  Observe that $N_\text{flip}(x, y)$ is larger than $N_\text{alt}(x, y)$ although the time complexity of the corresponding exploration algorithms is the same.  This may indicate superior quality of the Flip heuristic.

It may be noted that $N_\text{flip}(x^0, y^0)$ is not exactly equal to $N^1_\text{ex}(x^0, y^0)$.  Observe that most of the BQP heuristics yield solutions such that $y$ is already optimal for the given $x$.  Thus, in our implementation of the Flip heuristic, we avoid exploration of solutions that are different from the given one only in $y$.

Another strategy is to consider only a limited number of subsets $I^* \in I$, each of a larger size $k$.  The resulting algorithm is called \emph{Random Portions} improvement heuristic which proceeds as follows.  Let $k$, $1 < k \le m$, be a parameter of the algorithm.  Randomly select a set $I^* \subset I$ such that $|I^*| = k$.  Replace the solution $(x, y)$ with the best solution in $N_{I^*}(x, y)$ and proceed to the next random set $I^*$.  Terminate when either the prescribed number of iterations is completed or the given time limit is reached.  The time complexity of the Random Portions algorithm is $O(mn + \kappa n 2^k)$, where $\kappa$ is the number of iterations, and the size of the explored neighborhood $N^{k,\kappa}_\text{port}(x^0, y^0)$ is bounded by $|N^{k,\kappa}_\text{port}(x^0, y^0)| \le \kappa 2^{k + n}$.

\subsection{Variable Neighborhood Descent}

Given sufficiently large $k$, the Random Portions heuristic may be effective in removing a solution from a deep local maximum but it may be unnecessarily slow when applied to low quality solutions.  Thus, a good strategy is to start from a fast heuristic and, when it reaches a local maximum, apply the Random Portions algorithm for an appropriate value of $k$.  If it succeeds in removing the solution from the local maximum, go back to the fast heuristic.  Otherwise proceed to the Random Portions algorithm with a larger value of $k$.  Such an algorithm can be viewed as an implementation of the Variable Neighborhood Descent (VND)~\citep{Hansen2003}, and we implemented it as discussed in Algorithm~\ref{alg:vnd}.

\begin{algorithm}[ht]
	Produce a solution $(x, y)$ with the Greedy algorithm\;
	$\lambda \gets 1$\;
	\While {$\lambda = 1$}
	{
		$\lambda \gets 0$\;
		Improve the solution $(x, y)$ with the Alternating local search\;
		Improve the solution $(x, y)$ with the Flip local search\;
		\If {the solution was improved by the Flip local search}{$\lambda \gets 1$\;}
		\Else
		{
			\For {$k \gets 2$ \KwTo $p$}
			{
				\For {$\ell \gets 1$ \KwTo $m$}
				{
					Select a random $I^* \subseteq I$ such that $|I^*| = k$ and $\ell \in I^*$\;
					Fix $x_i$, $i \notin I^*$ and solve the remaining problem exactly\;
					If the solution was improved, set $\lambda \gets 1$\;
				}
			}
		}
	}
	
	\Return {the objective of $(x, y)$ and $(x, y)$}\;
\caption{The VND heuristic.}
\label{alg:vnd}
\end{algorithm}

In our VND, the initial solution is constructed with the Greedy heuristic.  Then it is repeatedly improved with Alternating and Flip heuristics until a local maximum in $N_\text{alt}(x, y) \cup N_\text{flip}(x, y)$ is reached.  Next, the Random Portions ($k = 2$) local search is applied.  If it does not succeed, the Random Portions ($k = 3$) local search is applied, etc.  If at some point the solution is improved, the algorithm restarts with Alternating and Flip heuristics.

The VND neighborhood $N_\text{VND}(x, y)$ is defined as follows:
\begin{equation}
\label{eq:vnd}
N_\text{VND}(x, y) = N_\text{alt}(x, y) \cup N_\text{flip}(x, y) \cup \bigcup_{p = 2}^k N^{p,m}_\text{port}(x, y) \,.
\end{equation}
Let us calculate the intersections of the components of (\ref{eq:vnd}).  Observe that the intersection of the first two neighborhoods is small: $N_\text{alt}(x^0, y^0) \cap N_\text{flip}(x^0, y^0) = \big\{ (x^0, y^0) \big\} \cup \big\{ (x, y^0) \suchthat x \in \{ 0, 1 \}^m \sum_{i \in I} |x_i - x^0_i| = 1 \big\}$.  In contrast, $N_\text{alt}(x^0, y^0) \cap N_{I^*}(x^0, y^0) = \big\{ (x^0, y) \suchthat y \in \{ 0, 1 \}^n \big\} \cup \big\{ (x, y^0) \suchthat x \in \{ 0, 1 \}^m \comma x_i = x^0_i \connect{for} i \notin I^* \big\}$, i.e., $N^{k,m}_\text{port}(x, y)$ includes a large part of $N_\text{alt}(x, y)$.  Moreover, $N_\text{flip}(x, y) \subset N^{k,m}_\text{port}(x, y)$ if the Random Portions heuristic is implemented as in Algorithm~\ref{alg:vnd}.

Let us now estimate the size of $N^{p,m}_\text{port}(x, y) \cap N^{k,m}_\text{port}(x, y)$.  In particular, we will show that the probability of choosing a set $I^*$ of size $p < k$ such that $N_{I^*}(x, y) \subset N^{k,m}_\text{port}(x, y)$ is low.  Observe that
$$
P(A \subset B) = \frac{k}{m} \cdot \frac{k - 1}{m - 1} \cdots \frac{k - p + 1}{m - p + 1} = \frac{k! \cdot (m - p)!}{m! \cdot (k - p)!} \,,
$$
if $A, B \in I$ are selected randomly, $|A| = p$ and $|B| = k$.  Thus, the probability that $N^{I^*}(x, y) \subset N^{k,m}_\text{port}(x, y)$ if $|I^*| = p$ is
\begin{equation}
\label{eq:probability}
P\big(N^{I^*}(x, y) \subset N^{k,m}_\text{port}(x, y)\big) = 1 - \left( 1 - \frac{k! \cdot (m - p)!}{m! \cdot (k - p)!} \right)^m \,.
\end{equation}
Note that, in Algorithm~\ref{alg:vnd}, the sets $I^*$ are not exactly arbitrary but for simplicity we can ignore it.  For small $k$ and $p$, $2 \le p < k$, we have
\begin{align*}
\lim_{m \to \infty} & P\big(N^{I^*}(x, y) \subset N^{k,m}_\text{port}(x, y)\big) \\
&= \lim_{m \to \infty} 1 - \left( 1 - \frac{k!}{(k - p)!} \cdot \frac{1}{m^p} \right)^m \\
&= \lim_{m \to \infty} 1 - \exp\left(- \frac{k!}{(k - p)! m^{p-1}}\right) = 0 \,.
\end{align*}
Even if $m = 100$, $k = 3$ and $p = 2$, the probability (\ref{eq:probability}) is below 6\%, i.e., only a small number of neighborhoods $N_{I^*}(x, y)$ dominated by $N^{k,m}_\text{port}(x, y)$ are selected within \linebreak[4]$N^{p,m}_\text{port}(x, y)$.

Thus, the neighborhood $N_\text{VND}(x, y)$ is significantly larger than any of its components which indicates the potential of the VND heuristic to yield superior solutions.  Although the quick heuristics Alternating and Flip are mostly dominated by following algorithms, it is expected that they speed up the VND method.

\bigskip

Another variation of VND that we call \emph{VND Exhaustive} alternates between the Alternating heuristic and the Exhaustive Portions algorithm.  As for the Flip heuristic, our implementation of VND Exhaustive does not explore solutions $\big\{ (x^0, y) \suchthat y \in \{ 0, 1 \}^n \big\}$ within the Exhaustive Portions local search.


\subsection{Multi-Start Algorithms}

Multi-Start method is a simple yet efficient technique that have been applied to many combinatorial optimization problems~\citep{Marti2003}.  There exist several variations of the metaheuristic; the one that we used is as follows:
\begin{inparaenum}[(a)]
\item produce a random solution;
\item improve it;
\item save the obtained solution if it is better than the best one found so far;
\item terminate if the given time has elapsed or proceed to the next random solution otherwise.
\end{inparaenum}

We have three fast local search algorithms, namely Alternating, Flip and VND Exhaustive ($k = 1$).  These algorithms are embedded within the multi-start framework to obtain enhanced heuristics.

\subsection{Row-Merge Algorithms}
\label{sec:rowmerge}

Another approach to reduce heuristically the size of the problem is to partition its rows into clusters and consider all the rows in each cluster merged together.  Let $\partition = \{ I^1, I^2, \ldots, I^k \}$ be a partition of $I$ into clusters $I^1$, $I^2$, \ldots, $I^k$.  Introduce a problem BQP$(Q, c, d, \partition)$:
\begin{align}
\text{maximize } & x^TQy + cx + dy\\
\text{subject to } & x_i = x_\ell \quad \text{for } i, \ell \in C,\ C \in \partition,\label{eq:bqp-partition-constraint}\\
 & x \in \{ 0, 1 \}^m, y \in \{ 0, 1 \}^n.
\end{align}
To solve BQP$(Q, c, d, \partition)$, compute
\[
q^*_{ij} = \sum_{r \in I^i} q_{rj}, \quad c^*_i = \sum_{r \in I^i} c_r \quad \text{and} \quad d^* = d
\]
for $i \in \{ 1, 2, \ldots, k \}$.  Let $(x^*, y^*)$ be an optimal solution of BQP$(Q^*, c^*, d^*)$, where $Q^* = (q^*_{ij})$ is a $k \times n$ matrix.  Let $x_r = x^*_i$ for each $r \in I^i$ and $i \in \{ 1, 2, \ldots, k \}$.  Then $(x, y^*)$ is an optimal solution to BQP$(Q, c, d, \partition)$.

We call this method \emph{Clustering Row-Merge} heuristic.  Its quality depends significantly on the partition $\partition$.  In particular, it is expected that rows $i$ and $\ell$ are `merged' only if $x_i = x_\ell$ in most of near-optimal solutions of the problem.  We propose the following partitioning technique.

Let $S^1$, $S^2$, \ldots, $S^p$ be $p$ `good solution' to BQP$(Q, c, d)$.  Create a complete weighted graph $G$ with a node set $I$\@.  Let the weight $w_{i \ell}$ of an edge $(i, \ell)$ in $G$ be the number of solutions $(x, y) \in \{S^1, S^2, \ldots, S^p\}$ such that $x_i = x_\ell$.  For a set of nodes $C \subseteq I$, let
\[
\mu(C) = \begin{cases}
\displaystyle{\min_{i \neq \ell \in C} w_{i \ell}} & \text{if $|C| > 1$,} \\
p & \text{otherwise}
\end{cases}
\]
and
\begin{equation}
\label{eq:partition-weight}
w(C) = |C| \cdot \mu(C) \,.
\end{equation}
Define the \emph{Partitioning Problem} be the problem of finding a partition $\partition = \{ I^1, I^2, \ldots, I^k \}$ of a given size $k$ such that $w(\partition) = \sum_{i = 1}^k w(I^k)$ is maximized.  Since the weight $w(I^i)$ of a cluster $I^i$ is proportional to the weight of the lightest edge in the clique induced by $I^i$, the elements of such a partition will tend to have no light edges.  It is important to note that $w(C)$ is also proportional to the cardinality of $C$ and, hence, it is preferable to have the number of clusters with light edges as small as possible.

\begin{theorem}
The Partitioning Problem is strongly NP-hard even if $k = 2$.
\end{theorem}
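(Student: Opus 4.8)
The plan is to reduce a known strongly NP-hard graph problem---Maximum Clique---to the Partitioning Problem with $k=2$, treating the complete weighted graph $G$ (equivalently, the symmetric weight matrix $(w_{i\ell})$ together with the value $p$) as the input. The guiding intuition is supplied by the weight function itself: a cluster $C$ earns the large multiplier $\mu(C)$ only when \emph{every} internal pair is heavy, so a part of the partition is valuable precisely when it induces a clique in the graph of heavy edges. Because $w(C)=|C|\,\mu(C)$ rewards size as well, an optimal $2$-partition wants to pack as many mutually-heavy vertices as possible into a single part; if the other part can be forced to be worthless, the optimum reads off the size of a maximum clique. The whole construction uses only the two weights $0$ and $p$, so the numbers stay bounded by a constant, and strong NP-hardness follows from that of Maximum Clique.

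Concretely, given a clique instance $(G',s)$ with $G'=(V',E')$, I would build the weighted graph on vertex set $V'\cup Z$, where $Z$ is a fixed constant-size set of dummy vertices (three is enough). I set $w_{i\ell}=p$ for every edge $(i,\ell)\in E'$ and $w_{i\ell}=0$ for every non-edge of $G'$; in addition, every pair incident to $Z$---both dummy--dummy and dummy--original pairs---receives weight $0$. Finally I set $k=2$ and let the heavy weight equal $p$, the singleton value, so that a singleton behaves exactly like a heavy cluster of size one. With these weights a part $C$ satisfies $\mu(C)=p$ iff all its internal pairs are heavy, i.e.\ iff $C\setminus Z$ is a clique of $G'$ and $C$ contains at most one dummy; otherwise $\mu(C)=0$.

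The correctness argument then has three steps. First, any entirely heavy part must avoid non-edges of $G'$ and all pairs inside $Z$, so such a \emph{productive} part is a clique of $G'$ plus at most one isolated dummy and contributes $p\,|C|$. Second---the crucial step---the dummy set forces at most one productive part: since the dummies pairwise conflict and conflict with every original vertex, the complement of any dummy-free clique necessarily contains two conflicting dummies and is therefore worthless, while a part consisting of a lone dummy contributes only $p$. Consequently the best a $2$-partition can do is to place a maximum clique of $G'$ in one part and discard everything else into the other, giving optimum value exactly $p\cdot\omega(G')$, where $\omega(G')$ is the clique number. Hence $G'$ has a clique of size $s$ iff the Partitioning Problem admits a solution of value at least $p\,s$; as all weights lie in $\{0,p\}$ with $p$ constant, the hardness is strong.

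The step I expect to be the main obstacle is ruling out the degenerate global maximum. Without the gadget, the absolute maximum $p\,n$ is attained whenever the vertex set splits into two heavy-cliques, i.e.\ whenever the conflict (non-edge) graph is bipartite, and this value carries no information about $\omega(G')$. The dummies exist exactly to make the conflict graph non-bipartite in a controlled way so that one of the two parts is always forced to be worthless; verifying that a constant number of mutually-conflicting dummies achieves this for \emph{every} input, while checking that the singleton special case $\mu(\{v\})=p$ cannot be exploited to salvage value from the discarded part, is the delicate bookkeeping. A secondary point---should one insist that the $(w_{i\ell})$ be genuinely realizable as agreement counts of $p$ feasible BQP solutions rather than supplied as raw input---is that such matrices form a restricted family; this would require exhibiting explicit solutions realizing the needed $\{0,p\}$ pattern (or a scaled perturbation preserving the argument), but for the Partitioning Problem as defined on the given weighted graph $G$ it is not needed.
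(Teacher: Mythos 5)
Your proposal is correct and takes essentially the same route as the paper: a reduction from Maximum Clique in which edges get weight $p$, non-edges get weight $0$, and zero-weight dummy vertices are added so that the part of the $2$-partition not holding the clique is forced to be (nearly) worthless, giving strong NP-hardness since all weights lie in $\{0,p\}$. The differences are cosmetic---the paper uses a single dummy vertex and then disposes of the degenerate cases (independent set, complete graph) by case analysis, while your three dummies make the optimum equal $p\,\omega(G')$ uniformly; one loose phrase in your first step (``a clique of $G'$ plus at most one isolated dummy'') contradicts your own construction, since dummy--original pairs have weight $0$, but the correct characterization (a productive part is a dummy-free clique or a lone dummy) is exactly what your second and third steps actually use, so nothing breaks.
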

\begin{proof}
We reduce the maximum clique problem to the partitioning problem.  Let $G = (I, E)$ be a given graph.  Let $G' = (I', E')$ be a complete weighted graph with node set $I' = I \cup \{ v \}$.  Let $w_{i \ell} = p$ if $(i, \ell) \in E$ and $w_{i \ell} = 0$ otherwise, where $w_{i \ell}$ is a weight assigned to the edge $(i, \ell) \in E$.  Observe that in such a graph, $w(C) = |C|p$ if $C \subset I'$ is a clique and $w(C) = 0$ otherwise (see~\ref{eq:partition-weight}).  

Let $k = 2$ and $\partition = \{ I^1, I^2 \}$ be an optimal partition of $I'$.  For simplicity, assume $v \in I^1$.  Then there exist the following options:
\begin{enumerate}
	\item If $|I| = 1$, the problem is trivial, and the maximum clique in $G$ is $I$.
	\item If $w(\partition) = p$, the graph $G$ is an independent set.  Indeed, if there exists an edge $(i, \ell) \in E$, then setting $I^2 = \{ i, \ell \}$ and $I^1 = I' \setminus I^2$ would result in $w(\partition) = 2p$.  If $|I| = 1$, the partition $I^1 = \{ v \}$ and $I^2 = I$ provides a solution with objective value $2p$.
	\item If $w(\partition) = |I'|p$, then $G$ is complete.  Indeed, if $|I^1| > 1$, then $I^1$ is not a clique and, thus, $w(I^1) = 0$.  Provided that $w(I^2) \le |I^2|p < |I'|p$, we get $I^1 = \{ v \}$ and $w(I^1) = p$.  Therefore, $w(I^2) = |I| p$ and, thus, $I^2 = I$ is a clique in $G$, i.e., $G$ is complete.
	\item If $p < w(\partition) < |I'|p$, then $I^2$ is a maximum clique in $G$.  Indeed, if $I^2$ is not a clique in $G$, then $w(I^2) = 0$ and $w(\partition) \le p$ since $w(I^1) \le p$.  Thus, $I^2$ is a clique and $|I^1| > 1$ resulting in $w(I^1) = 0$.  Observe that $w(\partition) = |I^2|p$ and, therefore, $I^2$ is maximized.  The result follows.
	\item Note that $w(\partition) \ge p$ for any graph $G$ because the weight of a partition $I^1 = \{ v \}$ and $I^2 = I$ is at least $p$.
\end{enumerate}
\end{proof}

Since the Partitioning Problem is NP-hard, we use a heuristic approach to solve it.  Consider the \emph{Greedy Partitioning Algorithm} that proceeds as follows.  Create a partition $\partition = \big\{ \{ 1 \}, \{ 2 \}, \ldots, \{ m \} \big\}$.  In each iteration, choose a pair $P, Q \in \partition$ such that $w(\partition \cup \{ P \cup Q \} \setminus \{ P, Q \} )$ is maximized and update $\partition \gets \partition \cup \{ P \cup Q \} \setminus \{ P, Q \}$.  Repeat this until $|\partition| = k$.

The algorithm performs $m - k$ iterations.  In each iteration, there are $m^2$ pairs of elements of $\partition$ to be compared.  It takes $O(m^2)$ time to calculate the weight of a merged cluster.  Hence, the time complexity of the Greedy Partitioning Algorithm is $O(m^5)$, which is unreasonable even for moderate instances.  The following is a modification of the algorithm that terminates in $O(m^2 \log m)$ time.

For $P, Q \subset I$, let $\delta(P, Q) = w(P \cup Q) - w(P) - w(Q)$ and $\mu(P, Q) = \mu(P \cup Q)$.  Set $\mu(P) = p$ for each $P$ in the initial partition $\partition = \big\{ \{ 1 \}, \{ 2 \}, \ldots, \{ m \} \big\}$.  Calculate the values $\mu(P, Q)$ for each $P \neq Q \in \partition$ and place all these pairs in an ordered set $L$.  Sort $L$ by $\delta(P, Q)$ in descending order.  Let $(P, Q)$ be the first element in $L$.  Merge $P$ and $Q$ by updating $\partition \gets \partition \cup (P \cup Q) \setminus \{ P, Q \}$.  Remove $(P, Q)$ from $L$.  For each $R \in \partition \setminus \{ P \cup Q \}$, also remove $(R, P)$ and $(R, Q)$ from $L$ and insert $(R, P \cup Q)$ such that the ordering of $L$ is preserved.  Note that $\mu(P \cup Q) = \mu(P, Q)$ and $\mu(R, P \cup Q)$ can be calculated in $O(1)$ time as
\[
\mu(R, P \cup Q) = \min \big\{ \mu(P, Q), \mu(R, P), \mu(R, Q) \big\} \,.
\]
Repeat the procedure until $|\partition| = k$.  The first iteration of the algorithm takes $O(m^2 \log m)$ time.  Then, on each iteration, the algorithm spends $O(m \log m)$ time to update $L$ and the values of $\mu$.  Since the number of iterations is $m - k$, the total complexity of the algorithm is $O(m^2 \log m)$.

By exploiting the fact that $p$ is usually a small fixed number, we are able to further speed up the Greedy Partitioning Algorithm.  Observe that $-pm \le \delta(P, Q) \le 0$.  For each $w = -pm, -pm + 1, \ldots, 0$, introduce a list $a_w$ that contains all the pairs $(P, Q)$ such that $\delta(P, Q) = w$, where $P$ and $Q$ are in the initial partition.  Creating such lists takes only $O(m^2)$ time.  On every iteration, select an arbitrary pair of cliques $(P, Q) \in a_w$, where $w$ is the largest index such that $a_{w} \neq \emptyset$.  Since the number of lists $a_w$ is $O(m)$, this operation takes only $O(m)$ time.  The updating procedure is similar to the one described above and also takes $O(m)$ time.  Thus, the complexity of the algorithm is $O(m^2)$.  We used this implementation in the Clustering Row-Merge algorithm.

Observe that both implementations of the Greedy Partitioning Algorithm require quick operations with pairs $(P, Q)$.  In our implementation, we maintain a hash table for each $P \in \partition$.  Such a hash table stores all the pairs $(P, Q)$, where $Q \in \partition$.  As the hash key, we use the first vertex $Q_1$ in $Q$.  Since the clusters are non-intersecting, such a key is unique and limited by $1 \le Q_1 \le m$.

\bigskip

The Clustering Row-Merge heuristic effectively exploits information obtained from several near-optimal solutions of the problem and, thus, may be useful in evolutionary algorithms and other metaheuristics.  However, if such solutions are unavailable, the Clustering Row-Merge algorithm cannot be used.  In order to apply the idea of row merging in a standalone heuristic, we propose the following method.  Generate a random partition $\partition = \{ I^1, I^2, \ldots, I^k \}$ of $I$ such that $I^i \neq \emptyset$ for each $i$.  Solve BQP$(Q, c, d, \partition)$ as described above and improve the resulting solution with a fast local search.  If the obtained solution is better than the best known so far, save it.  Repeat the procedure until the given time is elapsed.  We call this method \emph{Multi-Start Row-Merge}.

The Multi-Start Row-Merge algorithm can be modified to be used as an improvement procedure.  Let $(x, y)$ be the starting solution.  Generate a random partition $\partition(x)$ of $I$ of size $k$ such that $x_i = x_\ell$ for each $x, \ell \in C$, where $C \in \partition(x)$.  A solution to BQP$(Q, c, d, \partition(x))$ is the best solution in the neighborhood $N_\text{merge}(x^0, y^0) = \big\{ (x, y) \suchthat x \in \{ 0, 1 \}^m \comma x_i = x_\ell \connect{if} i, \ell \in C \connectcomma{where} C \in \partition(x^0) \connectcomma{and} y \in \{ 0, 1 \}^n \big\}$.  The size of this neighborhood is $|N_\text{merge}(x^0, y^0)| = 2^{k+n}$ and it takes $O(n2^k)$ time to explore it.  Observe that $(x, y) \in N_\text{merge}(x, y)$ and, thus, the corresponding local search never worsens the solution.  We call this improvement procedure \emph{Row-Merge Local Search}.

Assume $(x, y) \in N_\text{merge}(x^0, y^0)$ is an improvement over $(x^0, y^0)$.  Then $x$ is different from $x^0$ in at least $|C|$ values, where $C \in \partition(x^0)$.  Intuitively, it is unlikely that such an improvement is possible if $(x^0, y^0)$ is a near-optimal solution and the cluster $C$ is large.  Thus, this local search can be efficient only if $k$ is large.  Hence, it is impractical to solve BQP$(Q, c, d, \partition(x))$ exactly.  Note, however, that a heuristic solution to BQP$(Q, c, d, \partition(x))$ may be worse than $(x^0, y^0)$ and, thus, the Row-Merge Local Search algorithm needs to evaluate a solution before accepting it.




\section{Testbed}
\label{sec:testbed}

There are no standard test problems available in the literature for BQP\@.  Thus, we have created a testbed consisting of five instance types that correspond to some  of the real life applications of the problem.

In order to generate some of the instances, we need random bipartite graphs.  To generate a random bipartite graph $G = (V, U, E)$, we define seven parameters, namely $m = |V|$, $n = |U|$, $\underline{d}_1$, $\bar{d_1}$, $\underline{d}_2$, $\bar{d_2}$ and $\mu$ such that $0 \le \underline{d}_1 \le \bar{d_1} \le n$, $0 \le \underline{d}_2 \le \bar{d_2} \le m$, $m \underline{d}_1 \le n \bar{d}_2$ and $m \bar{d}_1 \ge n \underline{d}_2$.

The bipartite graph generator proceeds as follows.
\begin{enumerate}
	\item For each node $v \in V$, set $d_v$ as a uniformly distributed random integer in the range $\underline{d}_1 \le d_v \le \bar{d}_1$.
	\item For each node $u \in U$, set $d_u$ as a uniformly distributed random integer in the range $\underline{d}_2 \le d_u \le \bar{d}_2$.
	\item While $\sum_{v \in V} d_v \neq \sum_{u \in U} d_u$, alternatively select a node in $V$ or $U$ and regenerate its degree as described above.\footnote{In practice, if $m (\underline{d}_1 + \bar{d}_1) \approx n (\underline{d}_2 + \bar{d}_2)$, this algorithm converges very quickly.  However, in theory it may not terminate in finite time.  Thus, one can update it as follows.  If the equality $\sum_{v \in V} d_v = \sum_{u \in U} d_u$ is not achieved after a certain number of iterations, start a fixing procedure as follows.  Let $\delta = 1$ if $\sum_{v \in V} d_v < \sum_{u \in U} d_u$ and $\delta = -1$ otherwise.  While $\sum_{v \in V} d_v \neq \sum_{u \in U} d_u$ and $m \underline{d}_1 \le \delta + \sum_{v \in V} d_v \le m \bar{d}_1$, select randomly $v \in V$ such that $\underline{d}_1 \le d_v + \delta \le \bar{d}_1$ and update $d_v \gets d_v + \delta$.  While $\sum_{v \in V} d_v \neq \sum_{u \in U} d_u$, select randomly $u \in U$ such that $\underline{d}_2 \le d_u - \delta \le \bar{d}_2$ and update $d_u \gets d_u - \delta$.}
	\item Create a bipartite graph $G = (V, U, E)$, where $E = \emptyset$.
	\item Select a node $v \in V$ such that $d_v > \deg v$ (if no such node exists, go to the next step).  Let $U' = \{ u \in U \suchthat \deg u < d_u \connect{and} (v, u) \notin E \}$.  If $U' \neq \emptyset$, select a node $u \in U'$ randomly.  Otherwise randomly select a node $u \in U$ such that $(v, u) \notin E$ and $d_u > 0$; randomly select a node $v' \in V$ adjacent to $u$ and delete the edge $(v', u)$.  Add an edge $(v, u)$.  Repeat this step.
	\item For each edge $(v, u) \in E$, the weight $w_{vu}$ is a normally distributed integer ($\sigma = 100$ and $\mu$ is given).
\end{enumerate}


The following are the instance types used in our computational experiments.

\begin{enumerate}
\item The \emph{Random} instances are as follows: $q_{ij}$, $c_i$ and $d_j$ are normally distributed random integers with mean $\mu = 0$ and standard deviation $\sigma = 100$.

\item The \emph{Max Biclique} instances model the problem of finding a biclique of the maximum weight in a bipartite graph.  Let $G = (I, J, E)$ be a random bipartite graph with $\underline{d}_1 = n / 5$, $\bar{d_1} = n$, $\underline{d}_2 = m / 5$, $\bar{d_2} = m$ and $\mu = 100$.  Note that setting $\mu$ to 0 would make the weight of any large biclique likely to be around 0 and, thus, the landscape of the problem would be rather flat.  If $w_{ij}$ is the weight of an edge $(i, j) \in E$, set $q_{ij} = w_{ij}$ for every $i \in I$ and $j \in J$ if $(i, j) \in E$ and $q_{ij} = -M$ otherwise, where $M$ is large number.  Set $c = \zerovector{m}$ and $d = \zerovector{n}$.

\item The \emph{Max Induced Subgraph} instances model the problem of finding a subset of nodes in a bipartite graph that maximizes the total weight of the resulting induced subgraph.  The Max Induced Subgraph instances are similar to the Max Biclique instances except that $q_{ij} = 0$ if $(i, j) \notin E$ and $\mu = 0$.  The latter change is needed because having too many edges with positive weights would make the problem very simple (indeed, including all or almost all the nodes would yield a good solution).

\item The \emph{MaxCut} instances model the MaxCut problem as follows.  First, we generate a random bipartite graph as for the Max Induced Subgraph instances.  Then, we set $q_{ij} = -2 w_{ij}$ if $(i, j) \in E$ and $q_{ij} = 0$ if $(i, j) \notin E$.  Finally, we set $c_i = \frac{1}{2} \sum_{j \in J} q_{ij}$ and $d_j = \frac{1}{2} \sum_{i \in I} q_{ij}$.  For an explanation, see~\citep{Punnen2012}.

\item The \emph{Matrix Factorization} instances model the problem of producing a rank one approximation of a binary matrix.  The original matrix $H = (h_{ij})$ (see Section~\ref{sec:introduction}) is generated randomly with probability 0.5 of $h_{ij} = 1$.  The values of $q_{ij}$ are then calculated as $q_{ij} = 1 - 2 h_{ij}$, and $c = \zerovector{m}$ and $d = \zerovector{n}$.

\end{enumerate}

The proposed instances are hard to solve exactly. We have optimal solutions for many small instances but we do not know optimal objective values of any of the moderate and large instances used in our computational experiments.  Thus, in our experiments, we use the best known objective values to evaluate the quality of the obtained solutions.  All our test instances and best known solutions can be found at \url{http://www.cs.nott.ac.uk/~dxk/}. The best known objective function value is taken from our extensive experiments during the fine-tuning stages. The experimental results presented in various tables in this are obtained for a fixed parameter settings of the corresponding algorithms. We plan to keep the best-known solutions periodically updated based on our own experiments and results provided by other researchers.

\section{Empirical Evaluation}
\label{sec:experiments}

In this section, we provide empirical analysis of the algorithms discussed in Sections~\ref{sec:construction} and \ref{sec:improvement}.  Our experiments were conducted on an Intel i7-2600 CPU based PC\@.  All the algorithms are implemented in C\# 4.0, and no concurrency is used.

Several of the heuristic techniques described in this paper require an exact algorithm to solve a reduced problem. We are not attempting to provide efficient exact algorithms here. Since the reduced problems are of small size, some rudimentary algorithms are sufficient for our purpose. Obviously, more sophisticated exact algorithms will speed up our heuristics but the solution quality remains unaltered when the number of iterations is fixed.  We tested two basic approaches to  obtain an optimal solution for small size BQP\@.  Our  first approach is called an \emph{Exhaustive Enumeration}.  Since an optimal $y$ for a fixed $x$ can be obtained efficiently, we only need to try all possible $x$ values in an enumerative scheme. There are $2^m$ such values to be considered.  Thus, with a careful implementation, such an exhaustive enumeration algorithm terminates in $O(n 2^m)$ time.  Our second algorithm uses a mixed integer programming solver applied to the following problem:
$$
\begin{array}{rll}
\text{maximize} & \multicolumn{2}{l}{\displaystyle{\sum_{i \in I} \sum_{j \in J} q_{ij} z_{ij} + \sum_{i \in I} c_i x_i + \sum_{j \in J} d_j y_j}} \\
\text{subject to} \\
& z_{ij} \le x_i & \connect{for} i \in I \connect{and} j \in J,\\
& z_{ij} \le y_j & \connect{for} i \in I \connect{and} j \in J,\\
& z_{ij} \ge x_i + y_j - 1 & \connect{for} i \in I \connect{and} j \in J,\\
& x_i \in \{ 0, 1 \} & \connect{for} i \in I,\\
& 0 \le y_j \le 1 & \connect{for} j \in J,\\
& 0 \le z_{ij} \le 1 & \connect{for} i \in I \connect{and} j \in J.
\end{array}
$$
We will refer to this as \emph{MIP} approach.

As a mixed integer programming solver, we used CPLEX 12.4.  Setting parameters appropriately to choose suitable branch and bound strategy in CPLEX is crucial.  We found that the feasibility emphasis setting usually provided better performance in our experiments.  For the other CPLEX parameters, we used the default values except that we disabled multithreading.  To produce starting solutions for the MIP approach, we used the Greedy algorithm followed by the Alternating local search.  Giving a starting solution is not necessary for CPLEX but we found that providing it speeds up the optimization process.

\begin{table*}[ht] \centering
\begin{tabular}{@{} l @{} c @{} r r r r r @{} c @{} r @{}}
\toprule
$m \times n$ &\hspace*{2em}&Random&Max Bicl.&Max Ind.&MaxCut&Matr.~Fact.&\hspace*{2em}&Exhaustive~En.\\
\midrule
$20 \times 50$&&2.2&0.2&0.8&4.2&23.2&&0.5\\
$25 \times 50$&&64.9&0.4&3.1&55.3&419.9&&15.0\\
$30 \times 50$&&217.6&1.3&62.4&721.5&4\,820.1&&475.5\\
$35 \times 50$&&10\,956.4&0.8&45.8&3\,413.6&---&&14\,211.0\\
$40 \times 50$&&---&4.2&124.4&---&---&&$\approx 5.5 \cdot 10^5$\\
$45 \times 50$&&---&7.4&691.3&---&---&&$\approx 1.8 \cdot 10^7$\\
$50 \times 50$&&---&35.7&2\,658.6&---&---&&$\approx 5.6 \cdot 10^8$\\
\bottomrule
\end{tabular}
\caption{Evaluation of exact algorithms.  For each instance, the running time, in seconds, is reported for the MIP approach (if it does not exceed 5~hours).  The last column reports the running time of the Exhaustive Enumeration algorithm, in seconds, for the Random instances of the given size (for the instances of size $m \ge 40$, the reported times are obtained according to (\ref{eq:enum-time})).  The running time of the MIP approach includes the time spent on generating starting solutions.}
\label{tab:compare-exact}
\end{table*}

The results of our experiments with these rudimentary exact algorithms are reported in Table~\ref{tab:compare-exact}.  We report the running times of the Exhaustive Enumeration method for the Random instances only because its performance practically does not depend on the instance type. On our computational platform, the running time of the Exhaustive Enumeration algorithm can be roughly approximated as
\begin{equation}
\label{eq:enum-time}
t_\text{exhaustive}(n) = 10^{-8} \cdot n 2^{m} \text{ sec.}
\end{equation}
In contrast, the behavior of the MIP approach depends significantly on the size and the type of the problem, see Table~\ref{tab:compare-exact}.  Note, for example, that many matrix elements in the Max Biclique instances are `forbidden', which allows the MIP solver eliminate numerous variables each time some variable $x_i$ or $y_j$ is fixed to 1.  This explains the success of the MIP approach with the Max Biclique instances.

Observe that the MIP approach usually outperforms the Exhaustive Enumeration algorithm, especially for larger instances.  This result, however, holds only for small values of $n$.  Indeed, the performance of the MIP approach dramatically depends on the number of columns while the performance of the Exhaustive Enumeration is linear in $n$, see Figure~\ref{fig:exact-n}.  As a result, Exhaustive Enumeration outperforms MIP approach even for the Max Biclique instances, provided $n \gg m$.  Since the reduced problems arising within Random Portions, VND and Row-Merge heuristics preserve the number of columns of the original instance, we used the Exhaustive Enumeration algorithm in our implementations when an exact solution was needed.

\begin{figure*}[ht]
\begin{tikzpicture}
	\begin{semilogyaxis}[
		width=\textwidth,
		height=35ex,
		legend pos=south east,
		xlabel={The number of columns $n$},
		ylabel={Running time, sec},
		cycle list={
			{black, very thick, mark=*},
			{black, dashed, very thick},
		},
		legend style={cells={anchor=west}}
	]
	\addplot+ coordinates {
		(20, 0.058398)
		(40, 0.205836)
		(60, 0.259033)
		(80, 0.791367)
		(100, 0.576414)
		(120, 1.651994)
		(140, 1.060171)
		(160, 1.607649)
		(180, 2.714969)
		(200, 5.680359)
		(220, 3.744218)
		(240, 2.816628)
		(260, 6.413883)
		(280, 5.355944)
		(300, 4.815085)
		(320, 16.817551)
		(340, 6.483451)
		(360, 10.673724)
		(380, 5.818882)
		(400, 11.1396)
		(420, 20.689898)
		(440, 13.307324)
		(460, 16.161302)
		(480, 25.501894)
		(500, 18.378439)
	};
	\addlegendentry{MIP}
	\addplot+ coordinates {
		(20, 0.233151)
		(40, 0.416211)
		(60, 0.561778)
		(80, 0.799886)
		(100, 0.993978)
		(120, 0.996402)
		(140, 1.25131)
		(160, 1.371699)
		(180, 1.525548)
		(200, 1.854732)
		(220, 1.977129)
		(240, 2.1801)
		(260, 2.316807)
		(280, 2.151658)
		(300, 2.797814)
		(320, 3.349567)
		(340, 3.716738)
		(360, 3.724546)
		(380, 3.963442)
		(400, 3.857103)
		(420, 4.150096)
		(440, 4.257872)
		(460, 3.766495)
		(480, 4.560709)
		(500, 5.115744)
	};
	\addlegendentry{Exhaustive Enumeration}

	\end{semilogyaxis}
\end{tikzpicture}
\caption{The performance of the MIP approach significantly depends on the number of columns $n$ while the running time of the Exhaustive algorithm is linear in $n$.  The experiment was conducted for Max Biclique instances, $m = 20$, $n = 20, 40, \ldots, 500$.}
\label{fig:exact-n}
\end{figure*}
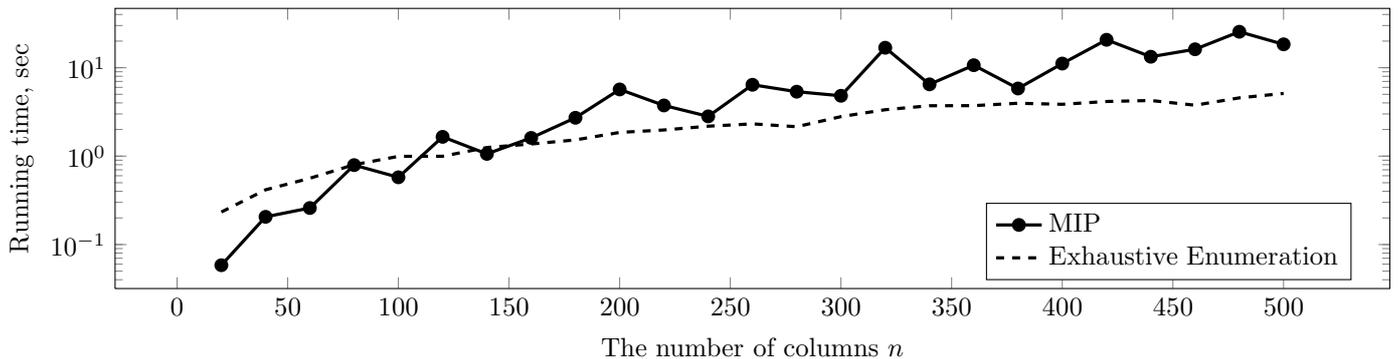

Another interesting observation is that the integer programming formulation cannot be used to obtain a good upper bound for the BQP\@.  The LP relaxation of the problem provides an objective that is normally several times larger than that of the optimal solution.  One can also use a mixed integer program solver to find an upper bound by giving it a limited time.  We noticed, however, that, in case of the BQP, the gap between the lower and the upper bounds is usually large until the very last steps of the algorithm, see Figure~\ref{fig:mip-dynamics}.
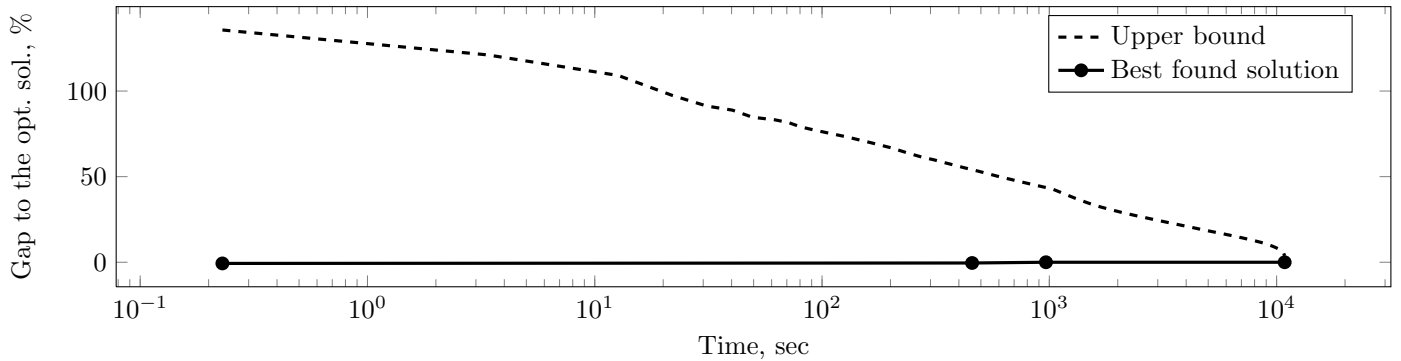
\begin{figure*}[ht]
\centering
\begin{tikzpicture}
\begin{semilogxaxis}[
xlabel={Time, sec},
ylabel={Gap to the opt.\ sol., \%},
width=\textwidth,
height=35ex,
legend pos=north east,
legend style={cells={anchor=west}}
]
\addlegendentry{Upper bound}
\addlegendentry{Best found solution}
\addplot[black,dashed,very thick] coordinates {
(0.23, 135.66779852858)
(3.37, 121.130447085456)
(12.68, 109.10441426146)
(21.95, 97.4320882852292)
(31.15, 91.2209960384833)
(40.34, 88.8440860215054)
(49.53, 84.6668081494058)
(61.9, 83.2059988681381)
(71.09, 81.6178551216752)
(80.31, 79.0994623655914)
(92.2, 77.2425014148274)
(128.76, 73.1819468024901)
(165.32, 69.708545557442)
(201.79, 66.8081494057725)
(238.23, 63.7945670628183)
(274.8, 61.4247311827957)
(311.22, 59.8082908885116)
(347.66, 58.0963497453311)
(383.89, 56.6461516694963)
(420.34, 55.2631578947369)
(457.39, 54.0923882286361)
(491.65, 53.0489530277306)
(529.01, 51.9418505942275)
(565.44, 50.9302490096208)
(601.85, 49.9858517260894)
(638.31, 49.1652518392756)
(674.66, 48.4543010752688)
(711.6, 47.6655348047538)
(748.45, 46.9934917940011)
(784.81, 46.3780418788908)
(821.17, 45.7590548953028)
(857.51, 45.1719015280136)
(893.82, 44.6661007357102)
(930.16, 44.1249292586304)
(966.49, 43.6332767402377)
(999.89, 43.1840690435767)
(1036.28, 42.6711941143181)
(1072.93, 41.8647425014148)
(1109.36, 41.0689020939445)
(1145.89, 40.2801358234295)
(1182.69, 39.5408885116016)
(1219.26, 38.8440860215054)
(1256.04, 38.168505942275)
(1292.72, 37.4929258630447)
(1329.68, 36.9128466327108)
(1365.34, 36.3327674023769)
(1401.79, 35.7597623089983)
(1438.5, 35.2185908319185)
(1475.18, 34.6915676287493)
(1512.07, 34.2034521788342)
(1548.48, 33.7542444821732)
(1585.95, 33.2838143746463)
(1623.95, 32.8805885681947)
(1662.95, 32.4844368986984)
(1702.92, 32.098896434635)
(1743.66, 31.7381154499151)
(1784.76, 31.3985568760611)
(1826.34, 31.0342388228636)
(1867.35, 30.7088285229202)
(1908.22, 30.3763440860215)
(1949.58, 30.0686191284663)
(1990.57, 29.7679683078664)
(2032.01, 29.4567062818336)
(2073.72, 29.1666666666667)
(2116.11, 28.8730899830221)
(2158.65, 28.6007357102434)
(2201.47, 28.3425297113752)
(2244.96, 28.1020090548953)
(2288.83, 27.8579513299377)
(2332.43, 27.6068194680249)
(2376.54, 27.366298811545)
(2421, 27.136389360498)
(2465.55, 26.9135540464063)
(2509.71, 26.7013299377476)
(2554.13, 26.4784946236559)
(2598.79, 26.2591963780419)
(2642.88, 26.0469722693831)
(2686.56, 25.8453593661573)
(2730.66, 25.6331352574986)
(2775.23, 25.4244482173175)
(2819.16, 25.2228353140917)
(2863.82, 25.0318336162988)
(2908.14, 24.8408319185059)
(2953.15, 24.6498302207131)
(2998.23, 24.4659026598755)
(3043.11, 24.2855121675156)
(3087.85, 24.1086587436333)
(3136.1, 23.9353423882286)
(3181.7, 23.7655631013016)
(3227.46, 23.592246745897)
(3271.22, 23.4260045274477)
(3317.25, 23.252688172043)
(3364.08, 23.0864459535937)
(3411.46, 22.923740803622)
(3458.59, 22.7681097906055)
(3505.44, 22.6195529145444)
(3552.55, 22.4603848330504)
(3599.57, 22.294142614601)
(3646.96, 22.134974533107)
(3694.45, 21.9864176570458)
(3755.82, 21.7812676853424)
(3948.96, 21.1870401810979)
(4141.84, 20.6140350877193)
(4333.04, 20.0693265421619)
(4525.45, 19.5316921335597)
(4719.11, 19.0223542727787)
(4916.36, 18.5483870967742)
(5114.35, 18.0850311262026)
(5315.69, 17.6393604980192)
(5518.43, 17.1972269383135)
(5722.29, 16.783389926429)
(5922.7, 16.383701188455)
(6122.17, 15.9875495189587)
(6321.72, 15.6020090548953)
(6520.08, 15.2306168647425)
(6716.53, 14.8698358800226)
(6910.53, 14.5090548953028)
(7107.47, 14.1518109790606)
(7305.34, 13.7910299943407)
(7500.93, 13.4408602150538)
(7695.44, 13.0942275042445)
(7890.03, 12.7475947934352)
(8079.76, 12.4080362195812)
(8273.23, 12.0684776457272)
(8464.4, 11.7218449349179)
(8649.33, 11.371675155631)
(8831.08, 11.0108941709112)
(9011.09, 10.6501131861913)
(9185.65, 10.2928692699491)
(9355.49, 9.91794001131862)
(9518.82, 9.54301075268817)
(9676.54, 9.16100735710243)
(9829.87, 8.76839275608376)
(9978.74, 8.37224108658744)
(10123.79, 7.97608941709111)
(10267.19, 7.56225240520657)
(10409.34, 7.10950764006791)
(10553.14, 6.57894736842105)
(10700.37, 5.89275608375778)
(10849.62, 3.74221844934918)
};
\addplot[black,very thick,mark=*] coordinates {
(0.23, -0.686191284663271)
(457.39, -0.431522354272779)
(966.49, 0)
(10849.62, 0)
};
\end{semilogxaxis}
\end{tikzpicture}
\caption{The progress of the MIP algorithm for a Random instance of size $35 \times 50$.  The gap between the proven upper bound and the optimal solution remains significant until the very last steps of the algorithm.  The starting solution is only 0.68\% away from the optimal one, and the optimal solution is obtained after 966~seconds.}
\label{fig:mip-dynamics}
\end{figure*}

Let us now discuss experimental results using our heuristics. Since there are several basic heuristic approaches that can be combined to form compound heuristics, it is important to adapt appropriate naming conventions to indicate the nature of the heuristics (combinations) considered. Where appropriate, we denote the algorithms as follows.  `Rn', `G', `A' and `F' stand for Random (a procedure that generates a random solution), Greedy, Alternating and Flip algorithms, respectively.  We use parentheses to indicate the initial solution for a local search.  For example, `A(G)' denotes the Alternating local search applied to a Greedy solution.  `P$_k$', `V$_k$' and V$^\text{ex}_k$ stand for Random Portions, VND and VND Exhaustive algorithms, respectively.  Multi-Start metaheuristic is denoted by `M', where the local search procedure is specified in parentheses.  The Clustering Row-Merge, Multi-Start Row-Merge and Row-Merge Local Search algorithms are denoted by `R$_k$', `R$^\text{m}_k$' and `R$^\text{ls}_k$', respectively. For quick reference, Table~\ref{hnames} provides a summary of notations for some of the important heuristics we discuss here.
\begin{table*}[ht] \centering
\begin{tabular}{@{}lp{15.0cm}@{}}
\toprule
Rn & Construction algorithm producing random solutions.\\
G & Greedy construction algorithm.\\
F & Flip local search algorithm.\\
A & Alternating local search algorithm.\\
$V^{ex}_1$ & VND exhaustive algorithm combining Alternating and Flip local searches.\\
M & Multi-Start metaheuristic.  In most of the experiments, terminates after a certain amount of time.\\
P$_k$ & Portions local search; $k$ is the `portion size'.  Terminates after a certain amount of time.\\
R$_k$ & Clustering Row-Merge construction heuristic; $k$ is the number of clusters.\\
R$^\text{m}_k$ & Multi-Start Row-Merge construction heuristic; $k$ is the number of clusters used in each iteration.\\
R$^\text{ls}_k$ & Row-Merge local search heuristic; $k$ is the number of clusters.\\
\bottomrule
\end{tabular}
\caption{Naming convention of the algorithms.}
\label{hnames}
\end{table*}

\subsection{Experiments with Small instances}
We initially tested the heuristic algorithms on the small instances.  Our experiments showed that the performance of the fastest algorithms (Greedy, Alternating, Flip and VND Exhaustive ($k = 1$)) is relatively good, on average, but those algorithms rarely reach optimality.  Increasing the value of $k$ in the VND Exhaustive heuristic significantly improves the solution quality.  However, even when $k = 7$, VND Exhaustive was not able to obtain optimal solutions for certain instances while the running time of the heuristic was roughly 10~minutes for each of the instances of size $50 \times 50$.

It turns out that one of the most efficient approaches for the small instances is the Multi-Start heuristic used with a fast local search.  Even being given only 10 iterations, M(V$^\text{ex}_1$) produced optimal solutions for almost all the Random instances.\footnote{In fact, optimal solutions to some of the test problem were not available to us (see Table~\ref{tab:compare-exact}), in which cases we used the best known solutions to evaluate the heuristics.}  With 100 iterations, M(V$^\text{ex}_1$) reaches optimality for all the Random, Max Induced Subgraph, MaxCut and Matrix Factorization instances.  By further increasing the number of iterations, we were able to improve the quality of M(V$^\text{ex}_1$) solutions for the Max Biclique problems though, even for $100\,000$ iterations, the algorithm did not reach the optimal solution for the Max Biclique problem of size $50 \times 50$.  Note, however, that for most of the test instances, M(V$^\text{ex}_1$) ($100\,000$ iterations) terminated in less than 10~seconds.  Moreover, the performance of M(V$^\text{ex}_1$) with only 1000 iterations was almost as good as its performance with $100\,000$ iterations.

The Row-Merge heuristics also work well for small instances.  For example, R$^\text{m}_{20}$, where the reduced problem is solved using V$^\text{ex}_2$  with reasonable time restriction (less than 1~second for most of the instances), yielded optimal solutions for all the test problems.

Using the insight obtained from experimental results on small size instances, we have decided on initial parameter setting and examined the scalability of the algorithms using medium and large instances.

\subsection{Experiments with Fast Heuristics}
\label{sec:evaluation-fast-heuristics}

We first analyse the performance of our fastest heuristics: Random, Greedy, Alternating, Flip and VND Exhaustive ($k = 1$).  The results of the experiments for the Random and Max Biclique instances can be found in Tables~\ref{tab:fast-random} and \ref{tab:fast-random-biclique}.  For each of the heuristics, the tables reports the running time and the relative gap between the solution obtained and the best known solution calculated as
\begin{equation}
\label{eq:gap-definition}
\mathit{gap}(f) = \frac{f_\text{best} - f}{f_\text{best}} \cdot 100 \% \,,
\end{equation}
where $f$ is the objective value of the solution obtained and $f_\text{best}$ is the objective of the best known solution for that test instance (note that for any practical instance $f_\text{best} > 0 $).  Where uncertainty is present (e.g., in `Rn' or `A(Rn)'), the experiment is repeated 10 times, and the average value is reported.  The last row of the table reports the average objective gap and the average running time for each heuristic.  These values, however, should be considered with caution.  For example, the running time of a heuristic may vary dramatically for different instances and, thus, the average running time almost does not reflect the performance of the algorithm for smaller instances.

\begin{table*}[ht] \centering
\begin{tabular}{@{} r @{} c @{} r r r r r r r r @{} c @{} r r r r @{}}
\toprule
&&\multicolumn{8}{c}{Gap to the best known, \%}&&\multicolumn{4}{c}{Time, ms}\\
\cmidrule(){3-10}
\cmidrule(){12-15}
$m \times n \phantom{000}$&\hspace*{2em}&Rn&G&A(Rn)&A(G)&F(Rn)&F(G)&V$^\text{ex}_1$(Rn)&V$^\text{ex}_1$(G)&\hspace*{2em}&G&A(G)&F(G)&V$^\text{ex}_1$(G)\\
\midrule
$\phantom{0}100 \times 1000$&&97.8&0.7&2.8&0.7&\underline{0.1}&0.2&0.1&0.2&&\underline{2.1}&2.7&3.4&24.5\\
$\phantom{0}200 \times 1000$&&103.4&1.8&2.9&0.3&0.3&\underline{0.1}&0.4&0.2&&\underline{12.0}&14.2&15.7&37.4\\
$\phantom{0}400 \times 1000$&&101.1&3.0&1.9&1.6&\underline{0.5}&0.9&0.6&0.5&&\underline{7.5}&11.9&14.7&54.6\\
$\phantom{0}600 \times 1000$&&98.5&4.7&1.3&1.4&\underline{0.6}&1.0&0.7&0.8&&\underline{14.2}&21.0&31.8&77.3\\
$\phantom{0}800 \times 1000$&&100.7&4.1&1.2&0.9&\underline{0.5}&0.6&0.7&0.8&&\underline{12.8}&19.3&45.6&80.7\\
$1000 \times 1000$&&102.9&4.0&1.2&1.2&1.0&\underline{0.5}&1.0&1.2&&\underline{20.2}&31.2&79.4&109.9\\[1ex]
$\phantom{0}500 \times 5000$&&99.5&1.1&3.0&1.0&0.2&0.1&0.2&\underline{0.1}&&\underline{42.5}&71.5&188.3&471.8\\
$1000 \times 5000$&&99.2&2.6&1.7&1.3&0.3&0.3&0.3&\underline{0.2}&&\underline{115.5}&175.9&421.3&926.4\\
$2000 \times 5000$&&99.2&3.4&1.2&0.9&\underline{0.4}&0.5&0.5&0.4&&\underline{180.8}&337.3&976.9&2940.1\\
$3000 \times 5000$&&101.3&3.7&1.1&0.9&0.6&\underline{0.3}&0.7&0.5&&\underline{289.9}&524.2&2596.9&4195.6\\
$4000 \times 5000$&&100.5&4.1&0.9&0.9&0.5&0.6&0.7&\underline{0.4}&&\underline{367.7}&652.0&1721.3&7538.2\\
$5000 \times 5000$&&99.5&4.4&1.0&1.0&\underline{0.8}&0.9&0.8&1.0&&\underline{407.9}&774.7&4806.9&1750.1\\
\midrule
Average&&100.3&3.1&1.7&1.0&\underline{0.5}&0.5&0.6&0.5&&\underline{122.8}&219.6&908.5&1517.2\\
\bottomrule
\end{tabular}
\caption{Evaluation of fast heuristics on the Random instances.  The running times of the local searches started from random solutions are close to those started from Greedy solutions.  The time needed to generate a random solution is negligible and, thus, is also skipped.}
\label{tab:fast-random}
\end{table*}

\begin{table*}[ht] \centering
\setlength{\tabcolsep}{0.5em}
\begin{tabular}{@{} r @{} c @{} r r r r r r r r @{} c @{} r r r r @{}}
\toprule
&&\multicolumn{8}{c}{Gap to the best known, \%}&&\multicolumn{4}{c}{Time, ms}\\
\cmidrule(){3-10}
\cmidrule(){12-15}
$m \times n \phantom{000}$&\hspace*{2em}&Rn&G&A(Rn)&A(G)&F(Rn)&F(G)&V$^\text{ex}_1$(Rn)&V$^\text{ex}_1$(G)&\hspace*{2em}&G&A(G)&F(G)&V$^\text{ex}_1$(G)\\
\midrule
$\phantom{0}100 \times 1000$&&417961.3&1.9&91.6&1.9&\underline{0.0}&\underline{0.0}&\underline{0.0}&\underline{0.0}&&\underline{20.1}&20.6&20.9&24.0\\
$\phantom{0}200 \times 1000$&&499709.4&5.5&100.0&5.5&18.9&2.2&\underline{1.3}&2.2&&\underline{13.4}&14.5&17.6&67.3\\
$\phantom{0}400 \times 1000$&&560477.3&18.0&100.0&13.2&91.3&7.5&\underline{3.2}&7.5&&\underline{6.9}&10.2&17.4&48.6\\
$\phantom{0}600 \times 1000$&&590077.1&22.8&97.4&21.0&92.6&8.3&\underline{6.2}&8.3&&\underline{56.5}&61.4&76.2&109.9\\
$\phantom{0}800 \times 1000$&&647673.5&33.1&99.1&30.4&89.5&11.4&\underline{6.7}&13.4&&\underline{13.7}&19.8&44.1&91.0\\
$1000 \times 1000$&&633927.8&47.7&97.7&45.5&85.7&27.2&\underline{13.7}&30.3&&\underline{15.7}&23.7&50.5&137.9\\[1ex]
$\phantom{0}500 \times 5000$&&3239937.6&4.0&100.0&3.6&4.7&0.0&\underline{0.0}&0.0&&\underline{44.8}&67.8&104.1&248.9\\
$1000 \times 5000$&&3924155.1&15.1&100.0&14.6&76.9&2.9&\underline{0.5}&5.1&&\underline{85.8}&127.7&262.2&638.3\\
$2000 \times 5000$&&4506581.7&32.6&99.8&30.8&95.6&13.1&\underline{12.1}&13.1&&\underline{164.6}&241.2&721.6&1666.6\\
$3000 \times 5000$&&4841943.6&49.4&99.4&47.1&96.4&28.5&\underline{14.2}&28.1&&\underline{236.0}&345.9&950.5&1744.6\\
$4000 \times 5000$&&5155697.4&48.4&99.9&33.1&98.9&32.2&\underline{12.1}&31.9&&\underline{322.7}&478.0&829.1&1448.2\\
$5000 \times 5000$&&4918821.7&66.3&99.8&65.5&98.9&45.0&\underline{19.0}&44.6&&\underline{379.8}&584.8&1156.9&3211.4\\
\midrule
Average&&2494747.0&28.7&98.7&26.0&70.8&14.9&\underline{7.4}&15.4&&\underline{113.3}&166.3&354.3&786.4\\
\bottomrule
\end{tabular}
\caption{Evaluation of fast heuristics on the Max Biclique instances.  The running times of the local searches started from random solutions are close to those started from Greedy solutions.  The time needed to generate a random solution is negligible and, thus, is also skipped.}
\label{tab:fast-random-biclique}
\end{table*}

Recall that the expected objective value of a random solution can be calculated using equation (\ref{eq:expected-random-solution}).  Since the expected weights $\bar{q}$, $\bar{c}$ and $\bar{d}$ in these instances are zeros, the expected objective value of solutions produced by the Random heuristic is also zero and the objective gaps are close to 100\%, see~(\ref{eq:gap-definition}).  This result also holds for the Max Induced Subgraph, MaxCut and Matrix Factorization problems.  However, it is different for a Max Biclique instance for which the expected objective value is a large negative number (recall that, in a Max Biclique problem, $q_{ij} = -M$ if there is no edge between $i$ and $j$ in the original bipartite graph).  Thus, the objective gaps of the random solutions for Max Biclique are very large, and the Alternating algorithm applied to a random solution quickly converges to near-zero solutions (recall that, according to Theorem~\ref{th:alternate}, the solution improved with the Alternating local search can never have negative objective).  It shows that the Alternating algorithm should not be normally applied to solutions with negative objective values as the Alternating local search tends to fall into a trivial local maximum of $(x, y) = (\zerovector{m}, \zerovector{n})$ in such cases.  Also note that, among the quick heuristics, V$^\text{ex}_1$(Rn) shows the best performance for Max Biclique problems and F(Rn) performs poorly.

Other than that, the performance of the fast heuristics has the same pattern for the other instance types.  The Greedy algorithm is a good option to obtain a reasonable solution in a very short time; it is the fastest heuristic in most of the experiments.  The Alternating and Flip algorithms are yet two fast and efficient local searches; the Alternating heuristic performs faster but the Flip algorithm yields better solutions.  The solution quality of V$^\text{ex}_1$ is similar to that of the Flip heuristic though V$^\text{ex}_1$ is notably slower.  However, the neighborhood of V$^\text{ex}_1$ is clearly larger than that of the Flip local search and, thus, being applied to a near-optimal solution, V$^\text{ex}_1$ is expected to demonstrate superior performance.

Whether we start a local search from a random or a Greedy solution, the performance of the local searches remains approximately the same.  However, we noticed that starting from the Greedy algorithm yields, on average, slightly better solutions and, thus, in what follows, we have used Greedy solutions as the starting ones by default.

\subsection{Experiments with Portions-Based and Multi-Start Heuristics}
\label{sec:evaluation-slower-heuristics}

In Tables~\ref{tab:slow-random} and \ref{tab:slow-biclique}, we compare Portions-based and Multi-Start heuristics, which include the VND Exhaustive (V$^\text{ex}_k$), VND (V$_k$), Random Portions (P$_k$) and Multi-Start (M(A), M(F) and M(V$^\text{ex}_1$)) algorithms.  For a fair competition, we give each of the Random Portions and Multi-Start algorithms the same amount of time as the VND ($k = 6$) takes to solve the corresponding instance.  We also studied performance of the heuristics being given smaller and larger times but this did not yield any new observations.

\begin{table*}[ht] \centering
\begin{tabular}{@{} r @{} c @{} r r r r r r r r r r @{} c @{} r r r @{}}
\toprule
&&\multicolumn{10}{c}{Gap to the best known, \%}&&\multicolumn{3}{c}{Time, sec}\\
\cmidrule(){3-12}
\cmidrule(){14-16}
$m \times n \phantom{000}$&\hspace*{2em}&V$^\text{ex}_1$&V$^\text{ex}_2$&V$_{6}$&M(A)&M(F)&M(V$^\text{ex}_1$)&P$_3$&P$_4$&P$_6$&P$_8$&\hspace*{2em}&V$^\text{ex}_1$&V$^\text{ex}_2$&V$_{6}$\\
\midrule
$\phantom{0}100 \times 1000$&&0.18&0.04&0.08&0.77&\underline{0.00}&\underline{0.00}&0.06&0.05&0.09&0.10&&\underline{0.0}&0.4&0.2\\
$\phantom{0}200 \times 1000$&&0.15&0.10&0.11&0.80&\underline{0.00}&0.01&0.11&0.10&0.11&0.15&&\underline{0.0}&1.5&0.5\\
$\phantom{0}400 \times 1000$&&0.52&0.52&0.52&0.59&0.09&\underline{0.08}&0.41&0.40&0.37&0.68&&\underline{0.1}&2.5&0.7\\
$\phantom{0}600 \times 1000$&&0.79&0.57&0.74&0.31&\underline{0.11}&0.13&0.71&0.63&0.71&0.70&&\underline{0.1}&18.5&1.2\\
$\phantom{0}800 \times 1000$&&0.77&0.62&0.74&0.46&\underline{0.25}&0.30&0.54&0.52&0.62&0.54&&\underline{0.1}&37.8&1.4\\
$1000 \times 1000$&&1.15&0.98&1.14&0.48&\underline{0.35}&0.39&0.53&0.56&0.56&0.59&&\underline{0.1}&54.3&2.1\\[1ex]
$\phantom{0}500 \times 5000$&&0.05&\underline{0.02}&0.04&1.88&0.05&0.03&0.16&0.15&0.12&0.11&&\underline{0.5}&66.0&4.7\\
$1000 \times 5000$&&0.21&0.19&0.20&1.09&\underline{0.15}&0.16&0.28&0.30&0.27&0.30&&\underline{0.9}&257.2&10.3\\
$2000 \times 5000$&&0.49&0.49&0.49&0.83&\underline{0.30}&0.33&0.44&0.44&0.47&0.44&&\underline{2.9}&1\,243.5&20.4\\
$3000 \times 5000$&&0.55&0.55&0.55&0.69&\underline{0.36}&0.42&0.50&0.49&0.55&0.46&&\underline{4.2}&686.1&34.5\\
$4000 \times 5000$&&0.44&0.40&0.44&0.55&\underline{0.39}&0.45&0.57&0.56&0.58&0.53&&\underline{7.5}&5\,490.2&48.5\\
$5000 \times 5000$&&0.97&0.62&0.97&0.68&0.57&\underline{0.57}&0.73&0.80&0.74&0.72&&\underline{1.8}&29\,088.8&42.5\\
\midrule
Average&&0.52&0.43&0.50&0.76&\underline{0.22}&0.24&0.42&0.42&0.43&0.44&&\underline{1.5}&3\,078.9&13.9\\
\bottomrule
\end{tabular}
\caption{Evaluation of slow heuristics on the Random instances.  The running times of M(A), M(F), M(V$^\text{ex}_1$), P$_3$, P$_4$, P$_6$ and P$_8$ are equal to that of V$_6$.}
\label{tab:slow-random}
\end{table*}

\begin{table*}[ht] \centering
\begin{tabular}{@{} r @{} c @{} r r r r r r r r r r @{} c @{} r r r @{}}
\toprule
&&\multicolumn{10}{c}{Gap to the best known, \%}&&\multicolumn{3}{c}{Time, sec}\\
\cmidrule(){3-12}
\cmidrule(){14-16}
$m \times n\phantom{000}$&\hspace*{2em}&V$^\text{ex}_1$&V$^\text{ex}_2$&V$_{6}$&M(A)&M(F)&M(V$^\text{ex}_1$)&P$_3$&P$_4$&P$_6$&P$_8$&\hspace*{2em}&V$^\text{ex}_1$&V$^\text{ex}_2$&V$_{6}$\\
\midrule
$\phantom{0}100 \times 1000$&&\underline{0.00}&\underline{0.00}&\underline{0.00}&87.95&\underline{0.00}&\underline{0.00}&\underline{0.00}&\underline{0.00}&\underline{0.00}&\underline{0.00}&&\underline{0.0}&0.2&0.2\\
$\phantom{0}200 \times 1000$&&2.21&\underline{0.00}&1.49&94.76&2.74&0.13&1.07&0.60&1.08&1.18&&\underline{0.1}&9.4&0.3\\
$\phantom{0}400 \times 1000$&&7.51&7.41&7.51&96.81&75.05&\underline{3.21}&7.53&7.51&7.70&7.52&&\underline{0.0}&6.6&0.6\\
$\phantom{0}600 \times 1000$&&8.33&\underline{1.14}&5.61&92.59&85.64&2.38&6.10&6.03&6.00&7.07&&\underline{0.1}&24.2&1.7\\
$\phantom{0}800 \times 1000$&&13.45&\underline{0.49}&11.46&95.29&85.40&6.01&11.02&11.75&11.76&12.24&&\underline{0.1}&51.2&1.5\\
$1000 \times 1000$&&30.35&11.11&25.64&95.41&78.20&\underline{10.88}&25.33&24.60&25.54&26.32&&\underline{0.1}&85.7&2.6\\[1ex]
$\phantom{0}500 \times 5000$&&0.03&0.03&0.03&98.58&4.72&\underline{0.00}&0.03&0.03&0.03&0.03&&\underline{0.2}&16.8&3.6\\
$1000 \times 5000$&&5.14&\underline{0.20}&2.33&98.56&1.09&0.40&1.37&2.03&1.45&3.04&&\underline{0.6}&219.5&11.6\\
$2000 \times 5000$&&13.12&\underline{6.23}&12.71&98.47&93.27&12.06&12.82&13.08&13.37&14.06&&\underline{1.7}&1\,684.6&20.9\\
$3000 \times 5000$&&28.11&\underline{4.74}&27.36&98.30&94.81&8.13&27.20&27.53&27.58&27.97&&\underline{1.7}&5\,205.6&43.3\\
$4000 \times 5000$&&31.87&12.70&27.09&98.98&98.02&\underline{8.92}&28.29&27.95&29.83&30.80&&\underline{1.4}&5\,866.5&147.3\\
$5000 \times 5000$&&44.62&21.47&43.14&99.12&97.64&\underline{15.45}&41.85&42.26&42.58&43.89&&\underline{3.2}&9\,012.8&121.2\\
\midrule
Average&&15.39&\underline{5.46}&13.70&96.24&59.71&5.63&13.55&13.61&13.91&14.51&&\underline{0.8}&1\,848.6&29.6\\
\bottomrule
\end{tabular}
\caption{Evaluation of slow heuristics on the Max Biclique instances.  The running times of M(A), M(F), M(V$^\text{ex}_1$), P$_3$, P$_4$, P$_6$ and P$_8$ are equal to that of V$_6$.}
\label{tab:slow-biclique}
\end{table*}

The most successful heuristic in this series of experiments is Multi-Start based on either Flip or V$^\text{ex}_1$ local search.  Usually, M(F) dominates M(V$^\text{ex}_1$); indeed, according to Table~\ref{tab:fast-random}, Flip is significantly faster than V$^\text{ex}_1$ and, thus, Multi-Start is able to complete more iterations when using the Flip improvement procedure.  However, for the Max Biclique instances, M(F) performs very poorly (see Table~\ref{tab:slow-biclique}) as a result of F(Rn) low performance (see Section~\ref{sec:evaluation-fast-heuristics}) and, therefore, M(V$^\text{ex}_1$) is preferable for the instances like Max Biclique.

It is worth noting that for the Matrix Factorization instances, the Multi-Start algorithms are sometimes dominated by the P$_3$ and/or P$_4$ heuristics although, on average, M(F) shows slightly better performance; a possible reason for that is discussed in Section~\ref{sec:evaluation-rowmerge-heuristics}.  In several experiments, M(F) and M(V$^\text{ex}_1$) were also outperformed by V$^\text{ex}_2$ but at the cost of unreasonable running times.  We recommend to use V$^\text{ex}_k$ with $k > 1$ only for relatively small instances.

According to our experiments, the optimal value of $k$ for the Random Portions heuristic is $k = 4$ (sometimes $k = 3$).  We conducted the tests for a wide range of $k$, given times and instances, and P$_4$ was the winner in most such experiments.  We believe, however, that, being given better starting solutions, the Random Portions heuristic would benefit from a larger value of $k$.

Performance of the VND ($k = 6$) is slightly worse, on average, than that of P$_4$, and this tendency holds for all the instance types.  In our experiments, we also tried a modification of the VND heuristic that restarts immediately after finding an improvement.  Considering that the P$_k$ algorithm is usually significantly slower than the Alternating and Flip local searches, this could speed up the heuristic.  However, our experiments showed that such a change did not improve the performance of the VND.

\subsection{Experiments with Row-Merge Heuristics}
\label{sec:evaluation-rowmerge-heuristics}

Unlike all other algorithms discussed above, the Clustering Row-Merge heuristic requires a number of good solutions to perform row clusterization.  In an evolutionary algorithm, one can, e.g., use a subset of the last generation for this purpose.  In iterative heuristics, one can use the best $p$ distinct solutions obtained so far.  In our experiments, we generate synthetic solutions by producing $p = 100$ random solutions which are then further improved with VND Exhaustive ($k = 1$).  Note that we do not include the time spent on generating these solutions in the running time of the Clustering Row-Merge heuristic.

Recall that the Clustering Row-Merge algorithm solves a constrained version of the BQP\@.  Thus, even if the obtained solution is optimal with respect to the BQP$(Q, c, d, \partition)$, it is unlikely that it remains a local maximum for either of the neighborhoods defined above after relaxation of the constraints (\ref{eq:bqp-partition-constraint}).  In our experiments, we apply V$^\text{ex}_1$ to every solution obtained by the Clustering Row-Merge heuristic.

As it was mentioned in Section~\ref{sec:rowmerge}, the Row-Merge Local Search cannot be efficient if $k$ is small.  In particular, it is necessary to keep the clusters in $\partition$ small.  Thus, we use $k = \lfloor m / 2 \rfloor$, $k = \lfloor m / 3 \rfloor$, etc.  The resulting reduced BQP instances are, therefore, too large to be solved exactly.  We use F(G) to obtain a solution $(x^{*}, y^{*})$ of such an instance.  Same approach is used within the Multi-Start Row-Merge heuristic as Multi-Start Row-Merge also performs better for small clusters (indeed, if a random cluster is large, then there is a high chance that neither assigning 0 nor 1 to the corresponding variable $x_{i^*}$ is good).  These decisions were made after extensive experimentation with different construction and improvement procedures.

\begin{table*}[ht] \centering
\begin{tabular}{@{} r @{} c @{} r r r r r r r r r @{} c @{} r r r r @{}}
\toprule
&&\multicolumn{9}{c}{Gap to the best known, \%}&&\multicolumn{4}{c}{Time, sec}\\
\cmidrule(){3-11}
\cmidrule(){13-16}
$m \times n \phantom{000}$&\hspace*{2em}&V$_6$&M$_{100}$&M(F)&M(V$^\text{ex}_1$)&R$_{15}$&R$_{20}$&R$^\text{m}_{\lfloor m / 2\rfloor}$&R$^\text{m}_{\lfloor m / 3\rfloor}$&R$^\text{ls}_{\lfloor m / 2\rfloor}$&\hspace*{2em}&V$_6$&M$_{100}$&R$_{15}$&R$_{20}$\\
\midrule
$100 \times 1000$&&0.08&\underline{0.00}&\underline{0.00}&\underline{0.00}&\underline{0.00}&\underline{0.00}&0.00&0.00&0.03&&\underline{0.2}&0.5&0.5&13.9\\
$200 \times 1000$&&0.11&\underline{0.00}&0.00&0.01&0.08&0.03&0.00&0.00&0.10&&0.5&1.1&\underline{0.4}&13.8\\
$400 \times 1000$&&0.52&\underline{0.06}&0.09&0.08&0.15&0.14&0.11&0.11&0.23&&0.7&2.2&\underline{0.6}&13.4\\
$600 \times 1000$&&0.74&\underline{0.10}&0.11&0.13&0.23&0.22&0.11&0.17&0.39&&1.2&3.3&\underline{0.6}&14.1\\
$800 \times 1000$&&0.74&0.23&0.25&0.30&0.28&0.26&0.21&\underline{0.20}&0.31&&1.4&4.0&\underline{0.9}&14.7\\
$1000 \times 1000$&&1.14&\underline{0.29}&0.35&0.39&0.35&0.31&0.42&0.34&0.33&&2.1&5.8&\underline{1.0}&15.9\\[1ex]
$500 \times 5000$&&0.04&\underline{0.01}&0.05&0.03&0.03&0.02&0.04&0.05&0.07&&4.7&22.5&\underline{2.5}&71.1\\
$1000 \times 5000$&&0.20&\underline{0.09}&0.15&0.16&0.12&0.11&0.14&0.13&0.18&&10.3&52.9&\underline{3.6}&71.7\\
$2000 \times 5000$&&0.49&\underline{0.25}&0.30&0.33&0.28&0.32&0.30&0.30&0.31&&20.4&131.6&\underline{7.3}&73.2\\
$3000 \times 5000$&&0.55&0.35&0.36&0.42&0.33&\underline{0.31}&0.43&0.39&0.40&&34.5&214.3&\underline{12.3}&79.9\\
$4000 \times 5000$&&0.44&0.35&0.39&0.45&0.33&\underline{0.30}&0.39&0.39&0.43&&48.5&286.9&\underline{17.2}&94.7\\
$5000 \times 5000$&&0.97&0.49&0.57&0.57&\underline{0.33}&0.35&0.61&0.56&0.63&&42.5&348.5&\underline{40.4}&111.1\\
\midrule
Average&&0.50&\underline{0.19}&0.22&0.24&0.21&0.20&0.23&0.22&0.28&&13.9&89.5&\underline{7.3}&49.0\\
\bottomrule
\end{tabular}
\caption{Evaluation of the Row-Merge heuristics on the Random instances.  The running time of M(F), M(V$^\text{ex}_1$), R$^\text{m}_{\lfloor m / 2 \rfloor}$, R$^\text{m}_{\lfloor m / 3 \rfloor}$ and R$^\text{ls}_{\lfloor m / 2 \rfloor}$ is fixed to that of V$_6$.  The algorithm M$_{100}$ denotes M(V$^\text{ex}_1$) terminated after 100 iterations.}
\label{tab:rowmerge}
\end{table*}

\begin{table*}[ht] \centering
\setlength{\tabcolsep}{0.3em}
\begin{tabular}{@{} r @{} c @{} r r r r r r r r r @{} c @{} r r r r @{}}
\toprule
&&\multicolumn{9}{c}{Gap to the best known, \%}&&\multicolumn{4}{c}{Time, sec}\\
\cmidrule(){3-11}
\cmidrule(){13-16}
$m \times n \phantom{000}$&\hspace*{2em}&V$_6$&M$_{100}$&M(F)&M(V$^\text{ex}_1$)&R$_{15}$&R$_{20}$&R$^\text{m}_{\lfloor m / 2\rfloor}$&R$^\text{m}_{\lfloor m / 3\rfloor}$&R$^\text{ls}_{\lfloor m / 2\rfloor}$&\hspace*{2em}&V$_6$&M$_{100}$&R$_{15}$&R$_{20}$\\
\midrule
$100 \times 1000$&&\underline{0.00}&\underline{0.00}&\underline{0.00}&\underline{0.00}&\underline{0.00}&\underline{0.00}&\underline{0.00}&\underline{0.00}&\underline{0.00}&&\underline{0.2}&0.5&0.4&10.8\\
$200 \times 1000$&&1.49&\underline{0.00}&2.74&0.13&\underline{0.00}&\underline{0.00}&\underline{0.00}&\underline{0.00}&0.50&&\underline{0.3}&0.7&0.4&12.1\\
$400 \times 1000$&&7.51&3.21&75.05&3.21&3.21&3.21&0.35&\underline{0.03}&6.60&&0.6&2.0&\underline{0.5}&12.5\\
$600 \times 1000$&&5.61&2.38&85.64&2.38&2.38&2.38&0.92&\underline{0.61}&2.48&&1.7&3.0&\underline{0.6}&11.9\\
$800 \times 1000$&&11.46&6.01&85.40&6.01&2.93&2.93&1.70&\underline{1.20}&4.40&&1.5&3.9&\underline{0.7}&12.2\\
$1000 \times 1000$&&25.64&10.88&78.20&10.88&7.79&7.79&\underline{1.10}&1.65&20.09&&2.6&4.2&\underline{1.0}&12.2\\[1ex]
$500 \times 5000$&&0.03&\underline{0.00}&4.72&\underline{0.00}&\underline{0.00}&\underline{0.00}&\underline{0.00}&\underline{0.00}&0.03&&3.6&14.1&\underline{2.2}&61.1\\
$1000 \times 5000$&&2.33&0.36&1.09&0.40&0.31&0.31&\underline{0.23}&0.46&0.28&&11.6&24.0&\underline{2.6}&58.2\\
$2000 \times 5000$&&12.71&12.06&93.27&12.06&12.07&12.07&1.14&\underline{0.71}&10.19&&20.9&61.0&\underline{5.7}&74.3\\
$3000 \times 5000$&&27.36&8.13&94.81&8.13&8.13&8.13&1.58&\underline{0.90}&10.91&&43.3&79.9&\underline{10.4}&84.4\\
$4000 \times 5000$&&27.09&8.92&98.02&8.92&26.63&26.63&1.85&\underline{0.81}&6.51&&147.3&92.2&\underline{15.6}&82.2\\
$5000 \times 5000$&&43.14&15.45&97.64&15.45&15.45&15.45&4.34&\underline{3.24}&10.08&&121.2&122.4&\underline{36.1}&125.5\\
\midrule
Average&&13.70&5.62&59.71&5.63&6.58&6.58&1.10&\underline{0.80}&6.01&&29.6&34.0&\underline{6.3}&46.4\\
\bottomrule
\end{tabular}
\caption{Evaluation of the Row-Merge heuristics on the Biclique instances.  The running time of M(F), M(V$^\text{ex}_1$), R$^\text{m}_{\lfloor m / 2 \rfloor}$, R$^\text{m}_{\lfloor m / 3 \rfloor}$ and R$^\text{ls}_{\lfloor m / 2 \rfloor}$ is fixed to that of V$_6$.  The algorithm M$_{100}$ denotes M(V$^\text{ex}_1$) terminated after 100 iterations.}
\label{tab:rowmerge-biclique}
\end{table*}

The results of computational experiments for the most successful variations of the Row-Merge heuristics are reported in Tables~\ref{tab:rowmerge} and \ref{tab:rowmerge-biclique}.  The Clustering Row-Merge heuristic shows outstanding performance for the Random instances and performs similarly to M(F) and M(V$^\text{ex}_1$) for the other instances.  Note, however, that generation of initial solutions (the corresponding procedure is denoted as M$_{100}$ in Tables~\ref{tab:rowmerge} and \ref{tab:rowmerge-biclique}) for the Clustering Row-Merge heuristic often takes more time than running the Clustering Row-Merge algorithm itself and produces solutions superior to the solutions obtained by the Clustering Row-Merge heuristic.  Our experiments with different values of $p$ showed that reduction on the number of initial solutions worsens the quality of the Clustering Row-Merge method.  Hence, the Clustering Row-Merge algorithm is impractical as a standalone heuristic.  We also tried to use a heuristic solver for the BQP$(Q, c, d, \partition)$ problems in order to increase the value of $k$.  We noticed that the better is the heuristic solver and the larger is $k$, the higher is the quality of the resulting solutions.  However, the best performance was observed for the exact algorithm for BQP$(Q, c, d, \partition)$ and $k = 15$ or $k = 20$.  It is also worth mentioning that we expect that using better initial solutions for the Clustering Row-Merge heuristic would improve its performance.

The Multi-Start Row-Merge and Row-Merge Local Search algorithms demonstrate very good performance for certain instances.  In particular, the Multi-Start Row-Merge and Row-Merge Local Search heuristics outperform, on average, all other heuristics for the Biclique and Matrix Factorization instances, respectively.  Indeed, iterative improvements may not be efficient for the Biclique problems characterized by deep local maxima.  The Multi-Start approach utilized in the Multi-Start Row-Merge algorithm is preferable in such a case.  In contrast, the landscape of the Matrix Factorization instances is rather flat and iterative improvements work well for it.  These results hold even if we allow more or less time to each of the heuristics.

\subsection{Transformation Based Solver}

Note that the BQP($Q,c,d$) can be formulated as a QP with $m+n$ variables:
\begin{align*}
\mbox{QP($\bar{Q},\bar{c}$):\hspace{.5cm}}&\mbox{Maximize } f(w) = w^T\bar{Q}w + \bar{c}w\\
&\mbox{subject to } w \in \{0,1\}^{m+n} \,,
\end{align*}
where
\begin{equation}\label{eq0}
\bar{Q}=\left[
\begin{array}{c|c}
O_{m\times m} & \frac{1}{2}Q\\ \hline
\frac{1}{2}Q & O_{n\times n}
\end{array}\right],\;
 \bar{c}\!=\!\left[\!
\begin{array}{c|c}
c & d
\end{array}\!\right]
 \mbox{ and }   w^T\!=\!\left[
\begin{array}{c|c}
\!x & y\!
\end{array}\right],
\end{equation}
and $O_{n \times n}$ and $O_{m \times m}$ are zero matrices.   Thus, any algorithm (exact or heuristic) to solve QP($\bar{Q},\bar{c}$) can be used to solve BQP as well.  This approach is particularly interesting because the QP problem is very well studied and there exist highly efficient algorithms for it (see, e.g., \citet{Lu2010,Wang2012}).  To assess the effectiveness of this approach, we compared some of our algorithms to a state-of-the-art heuristic for the QP~\citep{Wang2012}\@.  The parameters of the heuristic were determined empirically; we set the tabu tenure to 100 and the iterations cutoff to 10000.  We call the resulting BQP algorithm \emph{QP Based Solver}.

\begin{table*}[ht] \centering
\begin{tabular}{@{} r @{} c @{} r r r @{} c @{} r r r @{} c @{} r r r @{} c @{} r r r @{}}
\toprule
&&\multicolumn{3}{c}{Random}&&\multicolumn{3}{c}{Max Induced}&&\multicolumn{3}{c}{MaxCut}&&\multicolumn{3}{c}{Matrix Fact}\\
\cmidrule(){3-5}
\cmidrule(){7-9}
\cmidrule(){11-13}
\cmidrule(){15-17}
$m \times n \phantom{000}$&\hspace*{2em}&M(F)&R$^\text{m}_{\lfloor m / 3\rfloor}$&QP&\hspace*{2em}&M(F)&R$^\text{m}_{\lfloor m / 3\rfloor}$&QP&\hspace*{2em}&M(F)&R$^\text{m}_{\lfloor m / 3\rfloor}$&QP&\hspace*{2em}&M(F)&R$^\text{m}_{\lfloor m / 3\rfloor}$&QP\\
\midrule
$100 \times 1000$&&\underline{0.00}&0.00&0.61&&\underline{0.00}&\underline{0.00}&1.14&&\underline{0.71}&0.71&9.02&&1.33&1.56&\underline{0.24}\\
$200 \times 1000$&&\underline{0.00}&0.00&0.23&&\underline{0.03}&0.07&0.12&&1.55&\underline{1.51}&4.08&&0.64&0.69&\underline{0.04}\\
$400 \times 1000$&&\underline{0.09}&0.11&0.27&&\underline{0.04}&0.08&0.11&&3.25&3.23&\underline{2.33}&&0.40&0.48&\underline{0.14}\\
$600 \times 1000$&&\underline{0.11}&0.17&0.14&&0.27&0.27&\underline{0.21}&&3.52&4.22&\underline{1.24}&&0.31&0.32&\underline{0.12}\\
$800 \times 1000$&&0.25&0.20&\underline{0.08}&&0.35&0.37&\underline{0.11}&&4.52&4.29&\underline{1.81}&&0.83&0.80&\underline{0.12}\\
$1000 \times 1000$&&0.35&0.34&\underline{0.16}&&0.34&0.31&\underline{0.07}&&4.94&4.60&\underline{0.72}&&0.34&0.42&\underline{0.11}\\[1ex]
$500 \times 5000$&&\underline{0.05}&0.05&1.41&&\underline{0.02}&0.03&1.60&&\underline{1.38}&1.43&11.74&&1.56&1.44&\underline{0.60}\\
$1000 \times 5000$&&0.15&\underline{0.13}&0.46&&\underline{0.08}&0.10&0.60&&\underline{1.67}&1.79&7.73&&0.88&0.84&\underline{0.14}\\
$2000 \times 5000$&&0.30&0.30&\underline{0.15}&&0.23&0.27&\underline{0.14}&&2.70&\underline{2.56}&2.75&&0.58&0.62&\underline{0.04}\\
$3000 \times 5000$&&0.36&0.39&\underline{0.07}&&0.36&0.39&\underline{0.12}&&4.13&3.88&\underline{2.02}&&0.55&0.57&\underline{0.07}\\
$4000 \times 5000$&&0.39&0.39&\underline{0.05}&&0.54&0.53&\underline{0.16}&&4.91&4.79&\underline{1.95}&&0.62&0.58&\underline{0.08}\\
$5000 \times 5000$&&0.57&0.56&\underline{0.08}&&0.54&0.54&\underline{0.14}&&4.92&4.87&\underline{1.81}&&0.60&0.57&\underline{0.06}\\
\midrule
Average&&\underline{0.22}&0.22&0.31&&\underline{0.23}&0.25&0.38&&3.18&\underline{3.16}&3.93&&0.72&0.74&\underline{0.15}\\
\bottomrule
\end{tabular}
\caption{Evaluation of the QP Based Solver in comparison to M(F) and R$^\text{m}_{\lfloor m / 3\rfloor}$.  The running time of all the solvers is equal to that of V$_{6}$.}
\label{tab:qp}
\end{table*}

We report some of the results of our experiments in Table~\ref{tab:qp}.  An interesting observation is that the performance of the QP Based Solver significantly depends on the type of the instance.  For example, it clearly outperforms other algorithms for the Matrix Factorisation instances while losing the competition with Random instances.  Moreover, it might be impossible to apply the QP Based Solver to the Max Biclique problems using the above transformation\footnote{Recall that the Max Biclique instances contain large negative weights to guarantee that only existing edges of the underlying graph problem are included in optimal or near-optimal solutions.  As intermediate solutions may include several of such negative weights, their objectives often reach very large negative numbers.  In case of the QP solver we were using, such objective were causing overflows.}.

Another observation is that the QP Based Solver is performing very well for `square' or `near-square' instances (i.e., when $m \approx n$).  However, if $m \ll n$, the corresponding QP instance is significantly larger than the original BQP instance and this worsens the performance of the approach.

The more time the QP Based Solver is given, the better is its performance in comparison to the BQP specific algorithms.  Being given less time than in Table~\ref{tab:qp}, the QP Based Solver may not be able to produce any solution at all because its initialisation phase takes significant amount of time.  Such a behaviour is typical for a metaheuristic.

All this shows that the QP Based Solver is a viable approach though BQP specific algorithms are vital in certain circumstances.  Also note that we compare straightforward implementations of simple algorithms\footnote{Most noticeably, many of our algorithms could benefit from a more intelligent exact algorithm.} with an approach based on a sophisticated fine-tuned QP metaheuristic.

%

\section{Conclusion}
\label{sec:conclusion}

In this work, we considered an important combinatorial optimization problem called Bipartite Unconstrained 0-1 Programming Problem.  We proposed several heuristic approaches to construct and improve BQP solutions.  Several fast algorithms such as the Greedy construction heuristic and the Flip local search may be a reasonable choice if the solution quality requirements are low.

The success of slower techniques significantly depends on the instance type.  In particular, we noticed that simple Multi-Start heuristics perform very well for the Random instances.  For the instances with rather flat landscapes, iterative improvement methods work better.  For the problems having some `structure' (i.e., not purely random instances), the Row-Merge heuristics, being used with small clusters, provide superior solution quality comparing to simple Multi-Start algorithms.  Thus, we believe, this technique is of use in real applications.

The Clustering Row-Merge method also performs well for many problems though it can only be used within some metaheuristics.  A more sophisticated VND algorithm did not work well in our experiments.

Another viable approach to the BQP is a transformation of the problem into the QP and then solving it with well-developed QP algorithms.  This works particularly well for `square' instances (i.e., when $n \approx m$) and large running times.

In addition to the heuristics, we compared two rudimentary exact algorithms, namely the MIP approach and the Exhaustive algorithm.  It was shown that the former is generally faster for `square' instances, while the latter is significantly better when $n \gg m$.

Detailed complexity analysis of all algorithms and worst-case performance ratio of the greedy algorithm are given.  We believe that our study will motivate other researchers exploring more efficient algorithms for solving BQP\@.  Since the trade-off between computational time and solution quality are often contradictory in nature, no single algorithm or solution approach is likely to be considered as the best.  

In order to reduce the size of the manuscript, we excluded some of the tables from this paper.  The full version of our experimental results, test instances and best known solutions is available at \url{http://www.cs.nott.ac.uk/~dxk/}.

\bigskip

\noindent{\bf Acknowledgements:} We are thankful to Ying Wang for providing executable version of her code for solving QP\@.  Also, we express our appreciation to the anonymous referees for helpful suggestions which helped us to improve the paper.

\bibliographystyle{model2-names}
\bibliography{bqp}{}

\end{document}